\newtheorem{theorem}{Theorem}[section]
\newtheorem{defn}{Definition}[section]
\newtheorem{lemma}{Lemma}[section]
\newtheorem{remark}{Remark}[section]
\begin{document}

\title{Modeling and Identification of Worst-Case Cascading Failures in Power Systems}

\author{Chao Zhai, Hehong Zhang, Gaoxi Xiao and  Tso-Chien Pan \thanks{Chao Zhai, Hehong Zhang, Gaoxi Xiao and Tso-Chien Pan are with Institute of Catastrophe Risk Management, Nanyang Technological University, 50 Nanyang Avenue, Singapore 639798.
They are also with Future Resilient Systems, Singapore-ETH Centre, 1 Create Way, CREATE Tower, Singapore 138602. Hehong Zhang and Gaoxi Xiao are also with School of Electrical and Electronic Engineering, Nanyang Technological University.
Email: EGXXiao@ntu.edu.sg}}

\maketitle

\begin{abstract}
Cascading failures in power systems normally occur as a result of initial disturbance or faults on electrical elements, closely followed by errors of human operators. It remains a great challenge to systematically trace the source of cascading failures in power systems. In this paper, we develop a mathematical model to describe the cascading dynamics of transmission lines in power networks. In particular, the direct current (DC) power flow equation is employed to calculate the transmission power on the branches. By regarding the disturbances on the elements as the control inputs, we formulate the problem of determining the initial disturbances causing the cascading blackout of power grids in the framework of optimal control theory, and the magnitude of disturbances or faults on the selected branch can be obtained by solving the system of algebraic equations. Moreover, an iterative search algorithm is proposed to look for the optimal solution leading to the worst case of cascading failures. Theoretical analysis guarantees the asymptotic convergence of the iterative search algorithm. Finally, numerical simulations are carried out in IEEE 9 Bus System and IEEE 14 Bus System to validate the proposed approach.
\end{abstract}

\section{Introduction}

The stability and secure operation of power grids have a great impact on other interdependent critical infrastructure systems such as energy system, transportation system, finance system and communication system. Nevertheless, contingencies on vulnerable components of power systems and errors of human operators could trigger the chain reactions ending up with the large blackout of power networks. For instance, the North America cascading blackout on August 14, 2003 made 50 million people living without electricity \cite{empg04}. The misoperation of a German operator in November 2006 triggered a chain reaction of power grids and finally caused 15 million Europeans losing access to power \cite{utce07}. Recently, a relay fault near Taj Mahal in India gave rise to a severe cascading blackout on July 31, 2012 affecting 600 million people. Thus, it is vital to identify worst possible attacks or initial disturbances on the critical electrical elements in advance and develop effective protection strategies to alleviate the cascading blackout of power systems.

A cascading blackout of power system is defined as a sequence of component outages that include at least one triggering component outage and subsequent tripping component outages due to the overloading of transmission lines and situational awareness errors of human operators \cite{hine16,dob07,vai12}. Note that a cascading failure does not necessarily lead to
a cascading blackout or load shedding. The existing cascading models basically fall into $3$ categories \cite{hine16}. The first type of models only reveals the topological property and ignores physics of power grids, and thus is unable to accurately describe the cascading evolution of power networks in practice \cite{hine10,yuy16}. The second type of models
focuses on the quasi-steady-state of power systems and computes the power flow on branches by solving the DC or alternating current (AC) power flow equations. The third one resorts to the dynamic modeling in order to investigate the effects of component dynamics on the emergence of cascading failures \cite{cate84,roy94,jia16}. A dynamic model of cascading failure was presented to deal with the interdependencies of different mechanisms \cite{jia16}, which takes into account the transient dynamics of generators and protective relays.

The disturbances on the transmission lines of power grids generally take the form of impedance or admittance changes in existing work \cite{tae16,tarsi70,fol82}. For example, the outage of a transmission line leads to the infinite impedance or zero admittance between two relevant buses.  Linear or nonlinear programming is normally employed to formulate the problem of determining the disruptive disturbances. \cite{tae16} presents two different optimization formulations to analyze the vulnerability of power grids. Specifically, the nonlinear programming is adopted to address the voltage disturbance, and nonlinear bilevel optimization is employed to deal with the power adjustment. The existing work has largely ignored the cascading process of transmission lines when the power system is suffering from disruptive disturbances. Previous optimization formulations are therefore not sufficient to describe the cascading dynamics of transmission lines in practice since the final configuration of power networks strongly depends on the dynamic evolution of transmission lines besides initial conditions.

In this paper we will develop a cascading model of power networks to describe the dynamical evolution of transmission lines. Moreover, the problem of determining the cause of cascading failure is formulated in the framework of optimal control theory by treating the disruptive disturbances of power systems as control inputs in the optimal control system. The proposed approach provides a new insight into tracing disruptive disturbances on vulnerable components of power grids.

The outline of this paper is organized as follows. Section \ref{sec:prob} presents the cascading model of power systems and the optimal control approach. Section \ref{sec:theory} provides theoretical results for the problem of identifying disruptive disturbances, followed by simulations and validation on IEEE 9 and 14 Bus Systems in Section \ref{sec:sim}. Finally, we conclude the paper and discuss future work in Section \ref{sec:con}.

\section{Problem Formulation}\label{sec:prob}

The power system is basically composed of power stations, transformers, power transmission networks, distribution stations and consumers (see Fig. \ref{disps}). In this work, we are interested in identifying disruptive disturbances ($e.g.$, lightning or storm) on transmission lines that trigger the chain reaction and cause the cascading blackout of power grids. The disturbances give rise to the admittance changes of transmission lines, which results in the rebalance of power flow in power grids. The overloading of transmission lines causes certain circuit breakers to sever the corresponding branches and readjust the power network topology. The above process does not pause until the power grid reaches a new steady state and transmission lines are not severed any more. In this section, we will propose a cascading model to describe the cascading process of transmission lines, where the DC power flow equation is solved to obtain the power flow on each branch. More significantly, the mathematical formulation based on optimal control is presented by treating the disruptive disturbances of power grids as the control inputs in the optimal control system.

\begin{figure}
\scalebox{0.05}[0.05]{\includegraphics{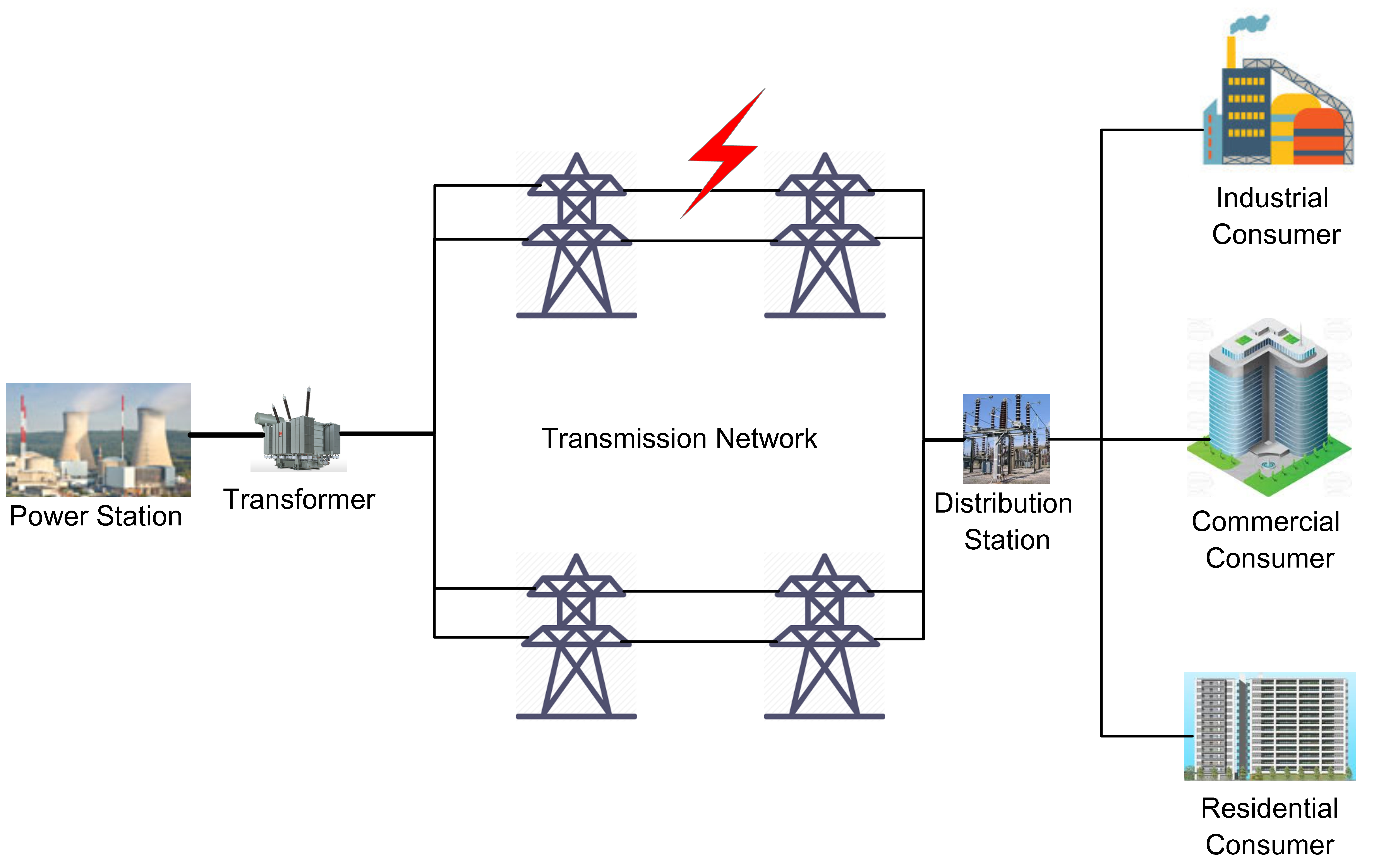}}\centering
\caption{\label{disps} Schematic diagram of power systems suffering from lightning on branches.}
\end{figure}

\subsection{Cascading model}
The cascading model describes the evolution of branch admittance as a result of overloading on transmission lines and the ensuing branch outage. To characterize the connection state of transmission line, we introduce the state function of the transmission line that connects Bus $i$ and Bus $j$ as follows
\begin{equation}\label{thre_fun}
g(P_{ij},c_{ij})=\left\{
                \begin{array}{ll}
                  0, & \hbox{$|P_{ij}|\geq \sqrt{c_{ij}^2+\frac{\pi}{2\sigma}}$;} \\
                  1, & \hbox{$|P_{ij}|\leq\sqrt{c_{ij}^2-\frac{\pi}{2\sigma}}$;} \\
                  \frac{1-\sin\sigma (P_{ij}^2-c_{ij}^2)}{2}, & \hbox{otherwise.}
                \end{array}
              \right.
\end{equation}
where $i$,$j\in I_{n_b}=\{1, 2,...,n_b\}$, $i\neq j$ and $n_b$ is the total number of buses in the power system. $\sigma$ is a tunable positive parameter. $P_{ij}$ refers to the transmitted power on the transmission line that links Bus $i$ and Bus $j$, and $c_{ij}$ denotes its power threshold. The state function $g(P_{ij},c_{ij})$ is differentiable with respect to $P_{ij}$, and it more closely resembles the step function as $\sigma$ increases (see Fig. \ref{thre}). The transmission line is in good condition when $g(P_{ij},c_{ij})=1$, while $g(P_{ij},c_{ij})=0$ implies that the transmission line has been severed by the circuit breaker.
\begin{figure}
\scalebox{0.6}[0.6]{\includegraphics{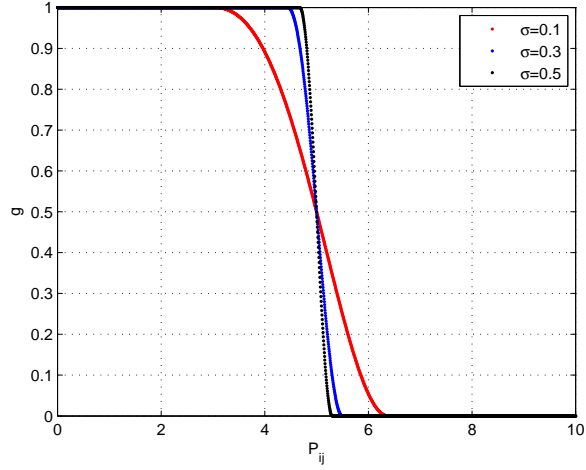}}\centering
\caption{\label{thre} Threshold function $g(P_{ij},c_{ij})$ with $c_{ij}=5$.}
\end{figure}
The cascading model of power systems at the $k$-th step can be presented as
\begin{equation}\label{state_eq}
Y_p^{k+1}=G(P_{ij}^k,c_{ij})\cdot Y_p^{k}+E_{i_k}u_k, \quad k=0,1,2,...m-1
\end{equation}
where  $Y_p^k=(y^k_{p,1},y^k_{p,2},...,y^k_{p,n})^T$ is the admittance vector for the $n$ transmission lines or branches at the $k$-th step, and $u_k=(u_{k,1},u_{k,2},...,u_{k,n})^T$ denotes the control input on transmission
lines. $m$ is the total number of cascading steps in power networks. $G(P_{ij}^k,c_{ij})$ and $E_{i_k}$ are the diagonal matrixes defined as
$$
G(P_{ij}^k,c_{ij})=\left(
              \begin{array}{cccc}
                g(P^k_{i_1j_1},c_{i_1j_1}) & 0 & . & 0 \\
                0 & g(P^k_{i_2j_2},c_{i_2j_2}) & . & 0 \\
                . & . & . & . \\
                0 & 0 & . & g(P^k_{i_nj_n},c_{i_nj_n}) \\
              \end{array}
            \right)
$$
and
$$
E_{i_k}=diag\left(e^T_{i_k}\right)=diag(\underbrace{0,..,0,1}_{i_k},0,...,0)\in R^{n \times n}.
$$
Here $E_{i_k}$ is used to select the $i_k$-th branch to add the control input on. The intuitive interpretation of Cascading Model (\ref{state_eq}) is that the admittance of transmission line becomes zero and remains unchanged after the branch outage and that the control input is added on the selected transmission line to directly change its admittance at the initial step.

\begin{remark}
Without the reclosing operation of circuit breakers, the maximum number of cascading steps $m$ should be less than or equal to $n$, $i.e.$, the total number of transmission lines in the power network. With the reclosing operation of circuit breakers, the cascading model is determined by
$$
Y_p^{k+1}=G(P_{ij}^k,c_{ij})\cdot Y_p^{0}+E_{i_k}u_k \quad k=0,1,2,...m-1
$$
This ensures that the transmission line gets reconnected once its transmission power is less than the specified threshold.
\end{remark}
\subsection{DC power flow equation}
In this work, we focus on the state evolution of transmission lines or mains in power systems and thus compute the DC power flow to deal with the overloading problem.
Specifically, the DC power flow equation is given by
\begin{equation}\label{pfe}
P_i=\sum_{j=1}^{n_b}B_{ij}\theta_{ij}=\sum_{j=1}^{n_b}B_{ij}(\theta_i-\theta_j)
\end{equation}
where $P_i$ and $\theta_i$ refer to the injection power and voltage phase angle of Bus $i$, respectively. $B_{ij}$ represents the mutual susceptance between Bus $i$ and Bus $j$, where $i,j\in I_{n_b}$.  Equation (\ref{pfe}) can be rewritten in matrix form \cite{stot09}
$$
P=B\theta
$$
where
$$
P=\left(
    \begin{array}{c}
      P_1 \\
      P_2 \\
      . \\
      P_{n_b} \\
    \end{array}
  \right),\quad B=\left(
               \begin{array}{cccc}
                 \sum_{i=2}^{n_b}B_{1i} & -B_{12} & . & B_{1n_b} \\
                 -B_{21} & \sum_{i=1,i\neq2}^{n_b}B_{2i} & . & B_{2n_b} \\
                 . & . & . & . \\
                 -B_{n_b1} & -B_{n_b2} & . & \sum_{i=1}^{n_b-1}B_{n_bi} \\
               \end{array}
             \right), \quad \theta=\left(
                    \begin{array}{c}
                      \theta_1 \\
                      \theta_2 \\
                      . \\
                      \theta_{n_b} \\
                    \end{array}
                  \right)
$$
Actually, $B$ is the nodal admittance matrix of power networks while using the DC power flow. The nodal admittance matrix $Y^k_{b}$ at the $k$-th time step can be obtained as
$$
Y^k_{b}=A^Tdiag(Y^k_p)A
$$
where $A$ denotes the branch-bus incidence matrix \cite{stag68}. Therefore, the matrix $B$ at the $k$-th time step of cascading failure can be calculated as
$$
B^k=Y^k_{b}=A^Tdiag(Y^k_p)A, \quad Y^k_p=(y^k_{p,1},y^k_{p,2},...,y^k_{p,n})^T, \quad y^k_{p,i}=-\frac{1}{Im(z^k_{p,i})}, \quad  i\in I_n=\{1,2,...,n\}
$$
where $z^k_{p,i}$ denotes the impedance of the $i$-th branch at the $k$-th time step. Then the DC power flow equation at the $k$-th time step is given by
\begin{equation}\label{dc_pfe}
    P^k=B^k\theta^k=Y_b^k\theta^k
\end{equation}
where $P^k=(P^k_1,P^k_2,...,P^k_{n_b})^T$ and $\theta^k=(\theta^k_1,\theta^k_2,...,\theta^k_{n_b})^T$.
During the cascading blackout, the power network may be divided into several subnetworks ($i.e.$, islands), which can be identified by analyzing the nodal admittance matrix $Y^k_{b}$. Suppose $Y^k_{b}$ is composed of $q$ isolated components or subnetworks denoted by $\mathrm{S}_i, i\in I_q=\{1,2,...,q\}$
and each subnetwork $\mathrm{S}_i$ includes $k_i$ buses, $i.e.$, $\mathrm{S}_i=\{i_1,i_2,...,i_{k_i}\}$, where $i_1$, $i_2$,..., $i_{k_i}$ denote the bus identity (ID) numbers and $\sum_{i=1}^{q}k_i=n_b$. Notice that Bus $i_1$ in Subnetwork $\mathrm{S}_i$ is designated as the reference bus, which is normally a generator bus in practice. The nodal admittance matrix of the $i$-th subnetwork can be computed as
$$
Y^k_{b, i}=\left(
  \begin{array}{c}
    e_{i_1}^T \\
    e_{i_2}^T \\
    . \\
    e_{i_{k_i}}^T \\
  \end{array}
\right)Y^k_{b}\left(e_{i_1}, e_{i_2} ,..., e_{i_{k_i}}\right), \quad i\in I_q
$$
For simplicity, we introduce two operators $*$ and $-1^*$ to facilitate the analytical expression and theoretical analysis of solving the DC power flow equation.
\begin{defn}\label{def}
Given the nodal admittance matrix $Y^k_b$, the operators $*$ and $-1^*$ are defined by
$$
\left(Y^k_{b}\right)^*=\sum_{i=1}^{q}\left(e_{i_1}, e_{i_2} ,..., e_{i_{k_i}}\right)\left(
           \begin{array}{c|c}
             0 & 0_{k_i-1}^T \\\hline
             0_{k_i-1} & I_{k_i-1} \\
           \end{array}
         \right)Y^k_{b,i}\left(
           \begin{array}{c|c}
             0 & 0_{k_i-1}^T \\\hline
             0_{k_i-1} & I_{k_i-1} \\
           \end{array}
         \right)\left(
  \begin{array}{c}
    e_{i_1}^T \\
    e_{i_2}^T \\
    . \\
    e_{i_{k_i}}^T \\
  \end{array}
\right)
$$
and
$$
\left(Y^k_{b}\right)^{-1^*}=\sum_{i=1}^{q}\left(e_{i_1}, e_{i_2} ,..., e_{i_{k_i}}\right)\left(
           \begin{array}{c}
             0^T_{k_i-1} \\
             I_{k_i-1} \\
           \end{array}
         \right)\left[\left(
                                                                                                 \begin{array}{cc}
                                                                                                   0_{k_i-1} & I_{k_i-1} \\
                                                                                                 \end{array}
                                                                                               \right)
Y^k_{b,i}\left(
           \begin{array}{c}
             0^T_{k_i-1} \\
             I_{k_i-1} \\
           \end{array}
         \right)
\right]^{-1}\left(
                                                                                                 \begin{array}{cc}
                                                                                                   0_{k_i-1} & I_{k_i-1} \\
                                                                                                 \end{array}
                                                                                               \right)\left(
  \begin{array}{c}
    e_{i_1}^T \\
    e_{i_2}^T \\
    . \\
    e_{i_{k_i}}^T \\
  \end{array}
\right),
$$
respectively, where
$$
I_{k_i-1}=\left(
            \begin{array}{cccc}
              1 & 0 & . & 0 \\
              0 & 1 & . & 0 \\
              . & . & . & 0 \\
              0 & 0 & . & 1 \\
            \end{array}
          \right)\in R^{(k_i-1)\times(k_i-1)}, \quad
0_{k_i-1}=\left(
                                  \begin{array}{c}
                                    0 \\
                                    0 \\
                                    . \\
                                    0 \\
                                  \end{array}
                                \right)\in R^{k_i-1}
$$
\end{defn}
\begin{remark}
The power network represented by the nodal admittance matrix $Y_b^k$ can be decomposed into $q$ isolated subnetworks, and each subnetwork is described by a submatrix $Y_{b,i}^k,i\in I_q$. The operators $*$ and $-1^*$ replace all the elements in the $1$-st row and the $1$-st column of $Y_{b,i}^k$ with $0$. Moreover, the operator $-1^*$ also replaces the remaining part of $Y_{b,i}^k$ with its inverse matrix. According to algebraic graph theory, the rank of nodal admittance matrix $Y^k_{b,i}$ is $k_i-1$ since each subnetwork $S_i, i\in \{1,2,...,q\}$ is connected \cite{god13}. Thus, it is guaranteed that the matrix
$$
\left(
     \begin{array}{cc}
       0_{k_i-1} & I_{k_i-1} \\
     \end{array}
   \right)
Y^k_{b,i}\left(
           \begin{array}{c}
             0^T_{k_i-1} \\
             I_{k_i-1} \\
           \end{array}
         \right)
$$
has full rank $k_i-1$ and thus it is invertible.
\end{remark}

\subsection{Optimization formulation}
The cascading dynamics of power system is composed of the cascading model defined by Equation (\ref{state_eq}) and DC power flow equation described by Equation (\ref{dc_pfe}), and these two components are coupled with each other, which characterizes the cascading blackout of power grids after suffering from disruptive disturbances. The optimal control algorithm allows us to obtain the disruptive disturbances by treating the disturbances as the control inputs of the optimal control system (see Fig. \ref{flow}). Specifically, the cascading model describes the outage of overloading branches and updates the admittance on transmission lines with the latest power flow, which is provided by the DC power flow equation. Meanwhile, the DC power flow equation is solved with the up-to-date admittance of branches from the cascading model. The above two processes occur iteratively in describing the evolution of admittance and transmission power on transmission lines. Moreover, the cascading dynamics of power system exactly functions as the state equation of optimal control system. In this way, the optimal control algorithm allows us to gain the control input or disruptive disturbance that triggers the chain reaction of the proposed cascading model.

The identification of disruptive disturbances in power systems can be formulated as the following optimal control problem.
\begin{figure}
\scalebox{0.06}[0.06]{\includegraphics{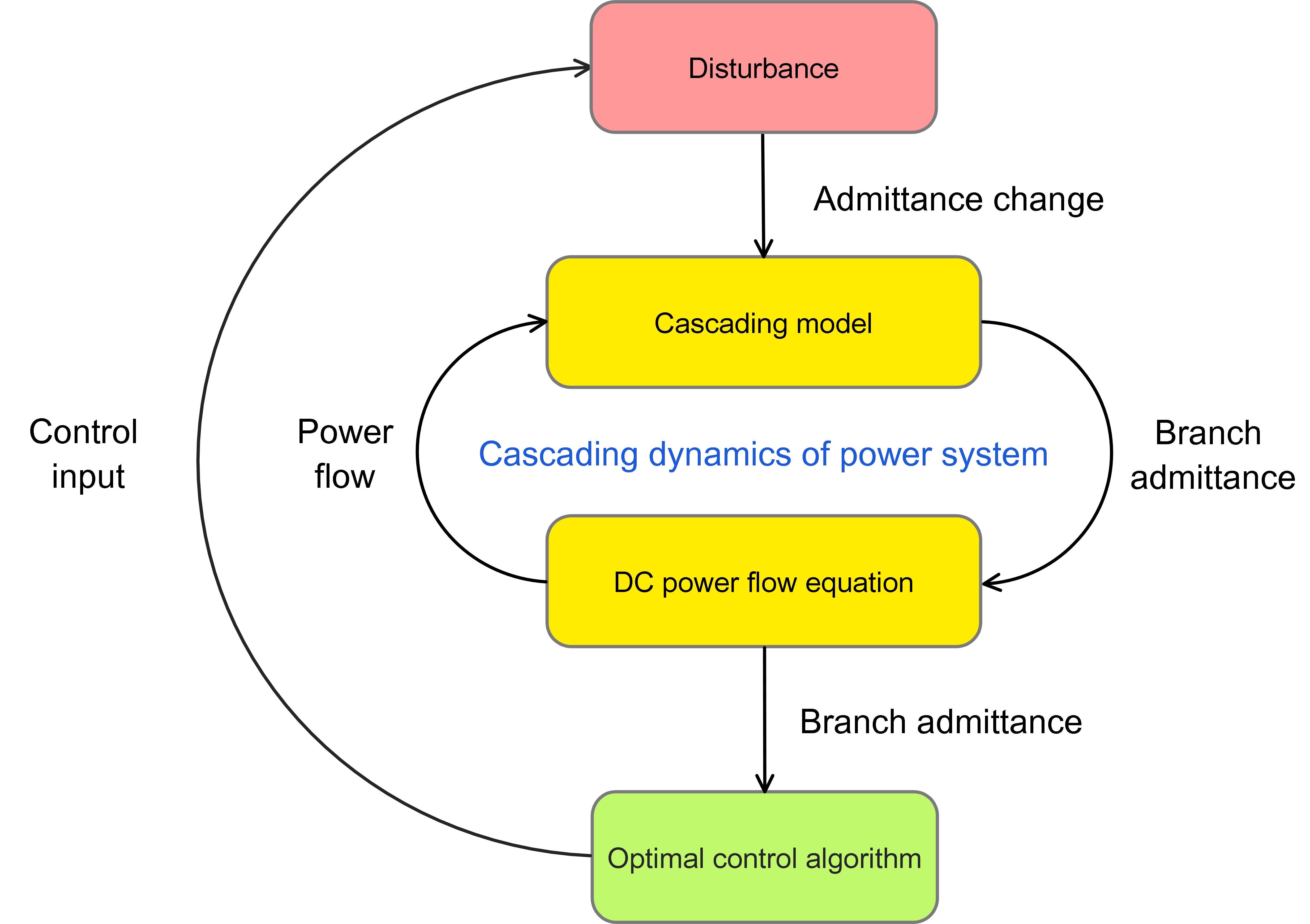}}\centering
\caption{\label{flow} Optimal control approach to identifying disruptive disturbances.}
\end{figure}

\begin{equation}\label{cost}
\min_{u_k}J(Y_p^m,u_k)
\end{equation}
with the cost function
\begin{equation}\label{cost_fun}
J(Y_p^m,u_k)=\mathrm{T}(Y_p^m)+\epsilon\sum_{k=0}^{m-1}\frac{\|u_k\|^2}{\max\{0,\iota-k\}}
\end{equation}
where $\mathbf{1}_n=(1,1,...,1)^T\in R^n$ and $\epsilon$ is a positive weight. $\|\cdot\|$ represents the 2-norm. As mentioned before, the state equation of the optimal control system consists of Equations (\ref{state_eq}) and (\ref{dc_pfe}). The above cost function includes two terms. Specifically, the first term $\mathrm{T}(Y_p^m)$ is differentiable with respect to $Y_p^m$, and it quantifies the power state or connectivity of power networks at the final step of cascading blackout, and the second term characterizes the control energy at the first $\iota$ time steps with the constraint $1\leq \iota\leq (m-1)$. In practice, $\mathrm{T}(Y_p^m)$ is designed according to the specific concerns about the worst-case scenario in power systems. In particular, the parameter $\epsilon$ is set small enough so that the first term dominates in the cost function. The objective is to minimize $\mathrm{T}(Y_p^m)$ by adding the appropriate control input $u_k$ on the branches of power systems at the given time step.

\begin{remark}
In practice, the disruptive disturbances of power systems come from contingencies such as lightning and situational awareness errors of human operators, etc.
\end{remark}

\section{Theoretical Analysis}\label{sec:theory}
In this section, we will present some theoretical results on the proposed optimal control algorithm. First of all, the properties of operators $*$ and $-1^*$ are given by the following lemmas.

\begin{lemma}\label{lem_def}
For the nodal admittance matrix $Y^k_b\in R^{n_b\times n_b}$, the equations
\begin{equation}\label{prod}
\begin{split}
    \left(Y^k_{b}\right)^*\left(Y^k_{b}\right)^{-1^*}=\left(Y^k_{b}\right)^{-1^*}\left(Y^k_{b}\right)^{*}&=\sum_{i=1}^{q}\left(e_{i_1}, e_{i_2} ,..., e_{i_{k_i}}\right)diag(0,1^T_{k_i-1})\left(
  \begin{array}{c}
    e_{i_1}^T \\
    e_{i_2}^T \\
    . \\
    e_{i_{k_i}}^T \\
  \end{array}
\right)
\end{split}
\end{equation}
hold.
\end{lemma}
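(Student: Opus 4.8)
The plan is to reduce everything to block‑matrix bookkeeping on each isolated subnetwork and then exploit the orthogonality of the bus‑selection matrices across subnetworks. First I would introduce compact notation: for each $i\in I_q$ let $\Pi_i=(e_{i_1},e_{i_2},\dots,e_{i_{k_i}})\in R^{n_b\times k_i}$ be the selection matrix of Subnetwork $\mathrm{S}_i$ and $R_i=\left(\begin{smallmatrix}0^T_{k_i-1}\\ I_{k_i-1}\end{smallmatrix}\right)\in R^{k_i\times(k_i-1)}$. With this notation $Y^k_{b,i}=\Pi_i^T Y^k_b\Pi_i$, the block matrix $\left(\begin{smallmatrix}0 & 0^T_{k_i-1}\\ 0_{k_i-1}&I_{k_i-1}\end{smallmatrix}\right)$ equals $R_iR_i^T$, and $M_i:=R_i^T Y^k_{b,i}R_i$ is exactly the lower‑right $(k_i-1)\times(k_i-1)$ block that appears in Definition \ref{def}, which is invertible by the Remark following that definition. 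The two operators then read $\left(Y^k_b\right)^*=\sum_{i=1}^q\Pi_i R_i M_i R_i^T\Pi_i^T$ and $\left(Y^k_b\right)^{-1^*}=\sum_{i=1}^q\Pi_i R_i M_i^{-1}R_i^T\Pi_i^T$, while the right‑hand side of (\ref{prod}) is $\sum_{i=1}^q\Pi_i R_iR_i^T\Pi_i^T$ since $\mathrm{diag}(0,1^T_{k_i-1})=R_iR_i^T$.

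Next I would compute $\left(Y^k_b\right)^*\left(Y^k_b\right)^{-1^*}$ by expanding the double sum $\sum_{i,j}\Pi_i R_i M_i R_i^T\Pi_i^T\Pi_j R_j M_j^{-1}R_j^T\Pi_j^T$. The key structural fact is that the subnetworks partition the bus set, so the selection matrices satisfy $\Pi_i^T\Pi_j=\delta_{ij}I_{k_i}$; hence every cross term with $i\neq j$ vanishes and only the diagonal terms survive. Using $\Pi_i^T\Pi_i=I_{k_i}$ together with the elementary identity $R_i^T R_i=I_{k_i-1}$, each diagonal term collapses to $\Pi_i R_i M_i M_i^{-1}R_i^T\Pi_i^T=\Pi_i R_iR_i^T\Pi_i^T$, which is precisely the $i$‑th summand on the right‑hand side of (\ref{prod}). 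The reverse product $\left(Y^k_b\right)^{-1^*}\left(Y^k_b\right)^*$ is handled identically, the only change being that $M_i^{-1}M_i=I_{k_i-1}$ is used in place of $M_iM_i^{-1}=I_{k_i-1}$, so both orders give the same matrix and the chain of equalities in (\ref{prod}) follows.

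I do not expect a genuine obstacle: the content is essentially the verification that, subnetwork by subnetwork, $*$ and $-1^*$ act as a true inverse pair on the range of the projector $R_iR_i^T$. The one point deserving care is the invertibility of $M_i$, which is where connectivity of each $\mathrm{S}_i$ enters — it guarantees that the Laplacian‑type matrix $Y^k_{b,i}$ has rank $k_i-1$ and that deleting the reference‑bus row and column yields a nonsingular matrix, a fact already recorded in the Remark after Definition \ref{def}. The remaining effort is purely bookkeeping: keeping the two block shapes $\left(\begin{smallmatrix}0 & 0^T\\ 0 & I\end{smallmatrix}\right)$ and $\left(\begin{smallmatrix}0^T\\ I\end{smallmatrix}\right)$ straight, and confirming that the reference bus $i_1$ is the coordinate consistently annihilated in both operators, so that the products telescope exactly as claimed.
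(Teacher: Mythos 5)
Your proof is correct and follows essentially the same route as the paper's: both reduce the product to a per-subnetwork computation in which the middle factor collapses to $M_iM_i^{-1}=I_{k_i-1}$ (resp.\ $M_i^{-1}M_i$), leaving the projector $R_iR_i^T=\mathrm{diag}(0,1^T_{k_i-1})$. If anything, your version is slightly more careful, since you justify explicitly via $\Pi_i^T\Pi_j=\delta_{ij}I_{k_i}$ why the cross terms in the double sum vanish, a step the paper's single-sum manipulation leaves implicit.
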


\begin{proof}
It follows from Definition \ref{def} that
\begin{equation*}
\begin{split}
\left(Y^k_{b}\right)^*\left(Y^k_{b}\right)^{-1^*}
         &=\sum_{i=1}^{q}\left(e_{i_1}, e_{i_2} ,..., e_{i_{k_i}}\right)\left(
           \begin{array}{c|c}
             0 & 0_{k_i-1}^T \\\hline
             0_{k_i-1} & I_{k_i-1} \\
           \end{array}
         \right)Y^k_{b,i}\left(
           \begin{array}{c|c}
             0 & 0_{k_i-1}^T \\\hline
             0_{k_i-1} & I_{k_i-1} \\
           \end{array}
         \right)\\
         &~~~\cdot\left(
           \begin{array}{c}
             0^T_{k_i-1} \\
             I_{k_i-1} \\
           \end{array}
         \right)\left[\left(
                     \begin{array}{cc}
                       0_{k_i-1} & I_{k_i-1} \\
                     \end{array}
                   \right)
            Y^k_{b,i}\left(
                       \begin{array}{c}
                         0^T_{k_i-1} \\
                         I_{k_i-1} \\
                       \end{array}
                     \right)
            \right]^{-1}\left(
                         \begin{array}{cc}
                           0_{k_i-1} & I_{k_i-1} \\
                         \end{array}
                       \right)\left(
              \begin{array}{c}
                e_{i_1}^T \\
                e_{i_2}^T \\
                . \\
                e_{i_{k_i}}^T \\
              \end{array}
            \right)\\
         &=\sum_{i=1}^{q}\left(e_{i_1}, e_{i_2} ,..., e_{i_{k_i}}\right)\left(
           \begin{array}{c}
             0^T_{k_i-1} \\
             I_{k_i-1} \\
           \end{array}
         \right)\left[\left(
                     \begin{array}{cc}
                       0_{k_i-1} & I_{k_i-1} \\
                     \end{array}
                   \right)
            Y^k_{b,i}\left(
                       \begin{array}{c}
                         0^T_{k_i-1} \\
                         I_{k_i-1} \\
                       \end{array}
                     \right)
            \right]\left(
                         \begin{array}{cc}
                           0_{k_i-1} & I_{k_i-1} \\
                         \end{array}
                       \right)\\
         &~~~\cdot\left(
           \begin{array}{c}
             0^T_{k_i-1} \\
             I_{k_i-1} \\
           \end{array}
         \right)\left[\left(
                     \begin{array}{cc}
                       0_{k_i-1} & I_{k_i-1} \\
                     \end{array}
                   \right)
            Y^k_{b,i}\left(
                       \begin{array}{c}
                         0^T_{k_i-1} \\
                         I_{k_i-1} \\
                       \end{array}
                     \right)
            \right]^{-1}\left(
                         \begin{array}{cc}
                           0_{k_i-1} & I_{k_i-1} \\
                         \end{array}
                       \right)\left(
              \begin{array}{c}
                e_{i_1}^T \\
                e_{i_2}^T \\
                . \\
                e_{i_{k_i}}^T \\
              \end{array}
            \right)\\.
\end{split}
\end{equation*}
Moreover, it follows from
\begin{equation*}
\begin{split}
&\left[\left(
                     \begin{array}{cc}
                       0_{k_i-1} & I_{k_i-1} \\
                     \end{array}
                   \right)
            Y^k_{b,i}\left(
                       \begin{array}{c}
                         0^T_{k_i-1} \\
                         I_{k_i-1} \\
                       \end{array}
                     \right)
            \right]\left(
                         \begin{array}{cc}
                           0_{k_i-1} & I_{k_i-1} \\
                         \end{array}
                       \right)\left(
           \begin{array}{c}
             0^T_{k_i-1} \\
             I_{k_i-1} \\
           \end{array}
         \right)\left[\left(
                     \begin{array}{cc}
                       0_{k_i-1} & I_{k_i-1} \\
                     \end{array}
                   \right)
            Y^k_{b,i}\left(
                       \begin{array}{c}
                         0^T_{k_i-1} \\
                         I_{k_i-1} \\
                       \end{array}
                     \right)
            \right]^{-1}\\
&=\left[\left(
                     \begin{array}{cc}
                       0_{k_i-1} & I_{k_i-1} \\
                     \end{array}
                   \right)
            Y^k_{b,i}\left(
                       \begin{array}{c}
                         0^T_{k_i-1} \\
                         I_{k_i-1} \\
                       \end{array}
                     \right)
            \right]I_{k_i-1}\left[\left(
                     \begin{array}{cc}
                       0_{k_i-1} & I_{k_i-1} \\
                     \end{array}
                   \right)
            Y^k_{b,i}\left(
                       \begin{array}{c}
                         0^T_{k_i-1} \\
                         I_{k_i-1} \\
                       \end{array}
                     \right)
            \right]^{-1}\\
&=I_{k_i-1}
\end{split}
\end{equation*}
that
$$
\left(Y^k_{b}\right)^{*}\left(Y^k_{b}\right)^{-1^*}=\sum_{i=1}^{q}\left(e_{i_1}, e_{i_2} ,..., e_{i_{k_i}}\right)diag(0,1^T_{k_i-1})\left(
  \begin{array}{c}
    e_{i_1}^T \\
    e_{i_2}^T \\
    . \\
    e_{i_{k_i}}^T \\
  \end{array}
\right).
$$
Likewise, we can prove
$$
\left(Y^k_{b}\right)^{-1^*}\left(Y^k_{b}\right)^{*}=\sum_{i=1}^{q}\left(e_{i_1}, e_{i_2} ,..., e_{i_{k_i}}\right)diag(0,1^T_{k_i-1})\left(
  \begin{array}{c}
    e_{i_1}^T \\
    e_{i_2}^T \\
    . \\
    e_{i_{k_i}}^T \\
  \end{array}
\right).
$$
\end{proof}
Lemma \ref{lem_def} indicates that the two operators $*$ and $-1^*$ are commutative for the same square matrix. Given the injection power for each bus $P^k=(P^k_1,P^k_2,...,P^k_{n_b})^T$ at the $k$-th time step, the quantitative relationship between $Y^k_p$ and power flow on each branch is presented as follows.
\begin{lemma}\label{lem_pij}
$$
P^k_{ij}=e_i^TA^T diag(Y^k_p)Ae_j(e_i-e_j)^T(A^T diag(Y^k_p)A)^{-1^*}P^k, \quad i,j\in I_{n_b}
$$
where $e_i=(\underbrace{0,...,0,1}_{i-th},0,...0)^T\in R^n$.
\end{lemma}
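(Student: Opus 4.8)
The plan is to reduce the identity to two facts: a closed form for the power transmitted on a branch in terms of the bus voltage phase angles, and the representation $\theta^k=(Y^k_b)^{-1^*}P^k$ of the phase-angle vector solving the DC power flow equation (\ref{dc_pfe}). I would begin by recalling that $Y^k_b=A^T diag(Y^k_p)A$ with $A$ the branch-bus incidence matrix, so that $Y^k_b$ has the structure of a (signed) weighted graph Laplacian: it is symmetric and its row and column sums vanish on each connected component. In particular $Y^k_b$ is singular, so (\ref{dc_pfe}) fixes $\theta^k$ only up to an additive constant on each island and must be solved island by island after grounding the reference bus $i_1$. The operator $-1^*$ of Definition~\ref{def} is precisely this grounded inverse, and by the Remark following Definition~\ref{def} the block inverted there has full rank $k_i-1$, so the construction is well posed.

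The main step is to show that $\theta^k=(Y^k_b)^{-1^*}P^k$ is the solution of (\ref{dc_pfe}) normalized so that $\theta^k_{i_1}=0$ for every reference bus. I would work one subnetwork $\mathrm S_i$ at a time: writing $Y^k_{b,i}=\left(\begin{smallmatrix} a_i & b_i^T\\ b_i & D_i\end{smallmatrix}\right)$ in the local ordering $(i_1,\dots,i_{k_i})$ (the off-diagonal blocks coincide because $Y^k_{b,i}$ is symmetric), the relation $Y^k_{b,i}\mathbf 1_{k_i}=0$ yields $b_i=-D_i\mathbf 1_{k_i-1}$, and $D_i$ is exactly the invertible block of the Remark. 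By Definition~\ref{def}, the restriction of $(Y^k_b)^{-1^*}P^k$ to $\mathrm S_i$ is $(0,\,D_i^{-1}\bar P_i)$ in local coordinates, where $\bar P_i$ collects the entries of $P^k$ at the non-reference buses of $\mathrm S_i$; multiplying by $Y^k_{b,i}$ reproduces $\bar P_i$ on the non-reference rows and, on the reference row, $b_i^TD_i^{-1}\bar P_i=-\mathbf 1^T_{k_i-1}\bar P_i$ (using $b_i=-D_i\mathbf 1_{k_i-1}$ and $D_i=D_i^T$), which equals $P^k_{i_1}$ because DC power is balanced on each island. Hence $Y^k_b\theta^k=P^k$ on each $\mathrm S_i$, and therefore globally since $Y^k_b$ is block-diagonal over the islands; consequently $\theta^k_i-\theta^k_j=(e_i-e_j)^T(Y^k_b)^{-1^*}P^k$. (Alternatively one may verify $Y^k_b(Y^k_b)^{-1^*}P^k=P^k$ straight from Lemma~\ref{lem_def}, using that $\mathbf 1_{k_i-1}$ spans the left kernel of each embedded $D_i$-block.)

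It then remains to express $P^k_{ij}$. In the DC model the power on the line joining Bus $i$ and Bus $j$ equals the mutual susceptance of that line times the phase-angle difference $\theta^k_i-\theta^k_j$, and this susceptance is exactly the $(i,j)$ entry $e_i^TA^T diag(Y^k_p)Ae_j$ of $Y^k_b$ (with $P^k_{ij}$ oriented consistently with that sign): a branch $\ell$ incident to both $i$ and $j$ contributes $Y^k_{p,\ell}A_{\ell i}A_{\ell j}$ to that entry while every other branch contributes $0$, so the sum collapses to the wanted susceptance term (or to a sum over parallel branches). If no branch joins $i$ and $j$, or if $i$ and $j$ lie in different islands, this entry is $0$, $P^k_{ij}=0$, and the identity is trivial; otherwise, substituting $\theta^k_i-\theta^k_j=(e_i-e_j)^T(Y^k_b)^{-1^*}P^k$ gives
\[
P^k_{ij}=e_i^TA^T diag(Y^k_p)Ae_j\,(e_i-e_j)^T\bigl(A^T diag(Y^k_p)A\bigr)^{-1^*}P^k ,
\]
which is the assertion.

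The step I expect to be the main obstacle is the singular-matrix one: since $Y^k_b$ has no ordinary inverse, the argument must exploit its Laplacian structure together with per-island power balance to guarantee that $(Y^k_b)^{-1^*}P^k$ really does solve (\ref{dc_pfe}), the required phase-angle differences being unaffected by the per-island additive gauge freedom. Everything else is routine bookkeeping with the incidence matrix $A$ and Definition~\ref{def}.
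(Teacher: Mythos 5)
Your proof is correct and follows essentially the same route as the paper's: write $P^k_{ij}=B^k_{ij}(\theta^k_i-\theta^k_j)=e_i^TB^ke_j\,(e_i-e_j)^T\theta^k$ and substitute $\theta^k=(Y^k_b)^{-1^*}P^k$. The only difference is that the paper simply asserts that $(Y^k_b)^{-1^*}P^k$ solves the DC power flow equation, whereas you verify it island by island using the Laplacian structure of $A^T diag(Y^k_p)A$ and per-island power balance — a worthwhile gap to fill, but not a different argument.
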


\begin{proof}
It follows from the solution to DC power flow equation $\theta^k=(B^k)^{-1^*}P^k$ and $B^k=A^T diag(Y^k_p)A$ that
\begin{equation*}
\begin{split}
P^k_{ij}=B^k_{ij}(\theta^k_i-\theta^k_j)&=e_i^TB^ke_j(e_i-e_j)^T\theta^k \\
&=e_i^TB^ke_j(e_i-e_j)^T(B^k)^{-1^*}P^k \\
&=e_i^TA^T diag(Y^k_p)Ae_j(e_i-e_j)^T(A^T diag(Y^k_p)A)^{-1^*}P^k.
\end{split}
\end{equation*}
\end{proof}

Similar to the matrix inversion, the operators $*$ and $-1^*$ satisfy the following equation in terms of the derivative operation.
\begin{lemma}\label{lem_der}
$$
\frac{\partial (A^T diag(Y^k_p)A)^{-1^*}}{\partial y^k_{p,i}}=-(A^T diag(Y^k_p)A)^{-1^*}(A^Tdiag(e_i)A)^*(A^T diag(Y^k_p)A)^{-1^*}, \quad i\in I_n.
$$
\end{lemma}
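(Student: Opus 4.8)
The plan is to imitate the standard matrix‑calculus identity $\partial_x M^{-1}=-M^{-1}(\partial_x M)M^{-1}$, adapted to the fact that $Y^k_b=A^Tdiag(Y^k_p)A$ is singular, so that the operators $(\,\cdot\,)^{*}$ and $(\,\cdot\,)^{-1^*}$ play the roles of $M$ and $M^{-1}$ only up to the reference‑bus projection. I would first record three elementary facts. First, the map $X\mapsto X^{*}$ of Definition \ref{def} is linear in $X$, since it is a composition of the fixed linear operations ``restrict to the bus set $S_i$'', ``zero the first row and column of a block'', ``embed back into $R^{n_b\times n_b}$'', and ``sum over $i$''; because $\partial\big(A^Tdiag(Y^k_p)A\big)/\partial y^k_{p,i}=A^Tdiag(e_i)A$, this gives $\partial (Y^k_b)^{*}/\partial y^k_{p,i}=\big(A^Tdiag(e_i)A\big)^{*}$. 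Second, writing $\Pi$ for the common value of the two products in Lemma \ref{lem_def} (equation (\ref{prod})), one checks directly from Definition \ref{def} that $\Pi (Y^k_b)^{-1^*}=(Y^k_b)^{-1^*}\Pi=(Y^k_b)^{-1^*}$, since in each subnetwork block $diag(0,1^T_{k_i-1})$ is a two‑sided identity for the corresponding block of $(Y^k_b)^{-1^*}$. Third, an infinitesimal perturbation of $Y^k_p$ leaves the island decomposition $\{S_i\}_{i\in I_q}$ unchanged, so $\Pi$ is locally constant in $y^k_{p,i}$; combined with the previous fact this yields $\partial (Y^k_b)^{-1^*}/\partial y^k_{p,i}=\Pi\,\partial (Y^k_b)^{-1^*}/\partial y^k_{p,i}$.

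With these in hand, I would differentiate the identity $(Y^k_b)^{*}(Y^k_b)^{-1^*}=\Pi$ of Lemma \ref{lem_def} with respect to $y^k_{p,i}$. By the product rule and the first fact above,
$$\big(A^Tdiag(e_i)A\big)^{*}(Y^k_b)^{-1^*}+(Y^k_b)^{*}\frac{\partial (Y^k_b)^{-1^*}}{\partial y^k_{p,i}}=0 .$$
Left‑multiplying by $(Y^k_b)^{-1^*}$ and using $(Y^k_b)^{-1^*}(Y^k_b)^{*}=\Pi$ (Lemma \ref{lem_def}) gives
$$(Y^k_b)^{-1^*}\big(A^Tdiag(e_i)A\big)^{*}(Y^k_b)^{-1^*}+\Pi\,\frac{\partial (Y^k_b)^{-1^*}}{\partial y^k_{p,i}}=0 .$$
Finally, by the third fact $\Pi\,\partial (Y^k_b)^{-1^*}/\partial y^k_{p,i}=\partial (Y^k_b)^{-1^*}/\partial y^k_{p,i}$, and rearranging produces exactly the claimed formula.

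The main obstacle is the bookkeeping around $\Pi$. Because $Y^k_b$ is singular, the naive cancellation $(Y^k_b)^{-1^*}(Y^k_b)^{*}=I$ is false and is replaced by the projection $\Pi$, so one must (i) verify that $\Pi$ is a genuine left/right identity both for $(Y^k_b)^{-1^*}$ and for its derivative, and (ii) argue that $\Pi$ does not itself vary under the differentiation, i.e.\ that the partition into islands is locally fixed — which is the only place a (mild) regularity assumption enters. Granting these, the computation above is the standard one‑line argument, and linearity of the $*$‑operator is what makes $\partial (Y^k_b)^{*}$ collapse to $\big(A^Tdiag(e_i)A\big)^{*}$.
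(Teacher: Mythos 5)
Your proposal is correct and follows essentially the same route as the paper's own proof: differentiate the identity $(Y^k_b)^{*}(Y^k_b)^{-1^*}=\Pi$ from Lemma \ref{lem_def}, left-multiply by $(Y^k_b)^{-1^*}$, and use linearity of the $*$-operator to reduce $\partial (Y^k_b)^{*}/\partial y^k_{p,i}$ to $(A^Tdiag(e_i)A)^{*}$. If anything you are more careful than the paper, which simply asserts that the projection acts as the identity on $\partial (Y^k_b)^{-1^*}/\partial y^k_{p,i}$ and tacitly treats the island decomposition as fixed, whereas you justify both points explicitly.
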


\begin{proof}
Lemma \ref{lem_def} allows to obtain
$$
(A^T diag(Y^k_p)A)^*\cdot(A^T diag(Y^k_p)A)^{-1^*}=\sum_{i=1}^{q}\left(e_{i_1}, e_{i_2} ,..., e_{i_{k_i}}\right)diag(0,1^T_{k_i-1})\left(
  \begin{array}{c}
    e_{i_1}^T \\
    e_{i_2}^T \\
    . \\
    e_{i_{k_i}}^T \\
  \end{array}
\right).
$$
Since the derivative of the constant is $0$, we have
\begin{equation*}
\begin{split}
&~~~~\frac{\partial [(A^T diag(Y^k_p)A)^*\cdot(A^T diag(Y^k_p)A)^{-1^*}]}{\partial y^k_{p,i}}\\
&=\frac{\partial (A^T diag(Y^k_p)A)^*}{\partial y^k_{p,i}}\cdot(A^T diag(Y^k_p)A)^{-1^*}+(A^T diag(Y^k_p)A)^* \cdot \frac{\partial (A^T diag(Y^k_p)A)^{-1^*}}{\partial y^k_{p,i}}\\
&=\frac{\partial}{\partial y^k_{p,i}}\sum_{i=1}^{q}\left(e_{i_1}, e_{i_2} ,..., e_{i_{k_i}}\right)diag(0,1^T_{k_i-1})\left(
  \begin{array}{c}
    e_{i_1}^T \\
    e_{i_2}^T \\
    . \\
    e_{i_{k_i}}^T \\
  \end{array}
\right)=0_{n_b\times n_b}.
\end{split}
\end{equation*}
Then it follows from
\begin{equation*}
\begin{split}
&(A^T diag(Y^k_p)A)^{-1^*}[\frac{\partial (A^T diag(Y^k_p)A)^*}{\partial y^k_{p,i}}\cdot(A^T diag(Y^k_p)A)^{-1^*}+(A^T diag(Y^k_p)A)^* \cdot \frac{\partial (A^T diag(Y^k_p)A)^{-1^*}}{\partial y^k_{p,i}}]\\
&=(A^T diag(Y^k_p)A)^{-1^*}\cdot\frac{\partial (A^T diag(Y^k_p)A)^*}{\partial y^k_{p,i}}\cdot(A^T diag(Y^k_p)A)^{-1^*}+diag(0,1^T_{n-1})\cdot \frac{\partial (A^T diag(Y^k_p)A)^{-1^*}}{\partial y^k_{p,i}}\\
&=(A^T diag(Y^k_p)A)^{-1^*}\cdot\frac{\partial (A^T diag(Y^k_p)A)^*}{\partial y^k_{p,i}}\cdot(A^T diag(Y^k_p)A)^{-1^*}+\frac{\partial (A^T diag(Y^k_p)A)^{-1^*}}{\partial y^k_{p,i}}=0_{n_b\times n_b}
\end{split}
\end{equation*}
that
\begin{equation*}
\begin{split}
\frac{\partial (A^T diag(Y^k_p)A)^{-1^*}}{\partial y^k_{p,i}}&=-(A^T diag(Y^k_p)A)^{-1^*}\frac{\partial(A^Tdiag(Y^k_p)A)^*}{\partial y^k_{p,i}}(A^T diag(Y^k_p)A)^{-1^*}\\
&=-(A^T diag(Y^k_p)A)^{-1^*}(A^Tdiag(\frac{\partial Y^k_P}{\partial y^k_{p,i}})A)^*(A^T diag(Y^k_p)A)^{-1^*}\\
&=-(A^T diag(Y^k_p)A)^{-1^*}(A^Tdiag(e_i)A)^*(A^T diag(Y^k_p)A)^{-1^*}.
\end{split}
\end{equation*}
\end{proof}
Next, we present theoretical results relevant to optimal control problem (\ref{cost}). For the discrete time nonlinear system, optimal control theory provides the necessary conditions for deriving the control input to minimize the given cost function.
\begin{theorem}\label{disopt}
For the discrete time optimal control problem
$$
\min_{u_k}J(x_k,u_k)
$$
with the state equation
$$
x_{k+1}=f(x_k,u_k), \quad k=0,1,...,m-1
$$
and the cost function
$$
J(x_k,u_k)=\phi(x_m)+\sum_{k=0}^{m-1}L(x_k,u_k),
$$
the necessary conditions for the optimal control input $u_k^*$ are given as follows
\begin{enumerate}
  \item $x_{k+1}=f(x_k,u_k)$
  \item $\lambda_k=(\frac{\partial f}{\partial x_k})^T\lambda_{k+1}+\frac{\partial L}{\partial x_k}$
  \item $(\frac{\partial f}{\partial u_k})^T\lambda_{k+1}+\frac{\partial L}{\partial u_k}=0$
  \item $\lambda_m=\frac{\partial \Phi}{\partial x_m}$
\end{enumerate}
\end{theorem}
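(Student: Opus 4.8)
The plan is to derive these relations by the classical Lagrange-multiplier (costate) method adapted to discrete time: adjoin each state equation to the cost with a multiplier vector $\lambda_{k+1}$ and impose stationarity of the resulting augmented functional. First I would form
$$
\bar{J}=\phi(x_m)+\sum_{k=0}^{m-1}\Bigl[L(x_k,u_k)+\lambda_{k+1}^T\bigl(f(x_k,u_k)-x_{k+1}\bigr)\Bigr],
$$
which coincides with $J$ along any trajectory satisfying $x_{k+1}=f(x_k,u_k)$, and introduce the discrete Hamiltonian $H_k=L(x_k,u_k)+\lambda_{k+1}^T f(x_k,u_k)$. Re-indexing the term $\sum_{k=0}^{m-1}\lambda_{k+1}^T x_{k+1}=\sum_{k=1}^{m}\lambda_k^T x_k$ lets me rewrite
$$
\bar{J}=\phi(x_m)-\lambda_m^T x_m+\sum_{k=0}^{m-1}H_k-\sum_{k=1}^{m-1}\lambda_k^T x_k,
$$
so that $u_k$, $\lambda_k$, the interior states $x_1,\dots,x_{m-1}$ and the terminal state $x_m$ all appear as free variables, while $x_0$ is prescribed and hence $\delta x_0=0$.

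Next I would compute the first variation $\delta\bar{J}$ with respect to these independent perturbations and require every coefficient to vanish at an optimum. Setting the coefficient of $\delta\lambda_{k+1}$ to zero recovers the state equation $x_{k+1}=f(x_k,u_k)$, i.e. condition~1. The coefficient of $\delta u_k$ equals $\partial H_k/\partial u_k=\partial L/\partial u_k+(\partial f/\partial u_k)^T\lambda_{k+1}$, whose vanishing is condition~3. For $1\le k\le m-1$ the coefficient of $\delta x_k$ is $\partial H_k/\partial x_k-\lambda_k=\partial L/\partial x_k+(\partial f/\partial x_k)^T\lambda_{k+1}-\lambda_k$, and setting it to zero gives the costate recursion of condition~2. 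Finally the coefficient of $\delta x_m$ is $\partial\phi/\partial x_m-\lambda_m$, which produces the terminal (transversality) condition~4.

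The step that needs the most care is the boundary bookkeeping in the index shift $\sum_{k=0}^{m-1}\lambda_{k+1}^T x_{k+1}=\sum_{k=1}^{m}\lambda_k^T x_k$ and the splitting of the resulting sum into the terms $k=1,\dots,m-1$ and the single term $k=m$: this is precisely what separates the interior variations $\delta x_1,\dots,\delta x_{m-1}$, which generate the difference equation for $\lambda_k$, from the terminal variation $\delta x_m$, which generates $\lambda_m=\partial\phi/\partial x_m$, and it relies on the initial state being fixed so that no condition is imposed at $k=0$. I would also state explicitly that $f$, $L$ and $\phi$ are assumed differentiable in their arguments (in the present application this is ensured by the smoothness of the threshold function $g$ and of $\mathrm{T}$), and emphasize that the four items are \emph{necessary} conditions obtained from stationarity of $\bar{J}$ rather than sufficient ones. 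As a closing remark I would record the compact form $\lambda_k=\partial H_k/\partial x_k$, $\partial H_k/\partial u_k=0$, $\lambda_m=\partial\phi/\partial x_m$, which is the version invoked in the sequel.
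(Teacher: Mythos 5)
Your derivation is correct and complete. The paper itself gives no proof of this theorem --- it simply notes that the result is the time-invariant special case of the discrete-time minimum principle in Lewis and Syrmos and omits the argument --- so there is nothing to compare against except that reference, and your adjoined-cost/Hamiltonian variational argument is precisely the standard derivation found there. The boundary bookkeeping you flag (the index shift isolating $\delta x_m$ from the interior variations, and $\delta x_0=0$ because the initial state is prescribed) is handled correctly, and your remarks that differentiability of $f$, $L$, $\phi$ must be assumed and that the conditions are only necessary (stationarity, with $u_k$ unconstrained) are both appropriate caveats that the paper leaves implicit.
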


\begin{proof}
It is a special case ($i.e.$, time invariant case) of the optimal control for the time-varying discrete time nonlinear system in \cite{fran95}. Hence the proof is omitted.
\end{proof}
By applying Theorem \ref{disopt} to the optimal control problem (\ref{cost}), we obtain the necessary conditions for identifying the disruptive disturbance of power systems with the cascading model (\ref{state_eq}) and the DC power flow equation (\ref{dc_pfe}).
\begin{theorem}\label{sysAE}
The necessary condition for the optimal control problem (\ref{cost}) corresponds to the solution of the following system of algebraic equations.
\begin{equation}\label{con_sys}
Y_p^{k+1}-G(P^k_{ij},c_{ij})Y_p^k+\frac{\max\{0,\iota-k\}}{2\epsilon}E_{i_k}\prod_{s=0}^{m-k-2}\frac{\partial Y_p^{m-s}}{\partial Y_p^{m-s-1}}\cdot\frac{\partial\mathrm{T}(Y_p^m)}{\partial Y_p^m}=\mathbf{0}_n, \quad k=0,1,...,m-1
\end{equation}
and the optimal control input is given by
\begin{equation}\label{con_input}
u_k=-\frac{\max\{0,\iota-k\}}{2\epsilon}E_{i_k}\prod_{s=0}^{m-k-2}\frac{\partial Y_p^{m-s}}{\partial Y_p^{m-s-1}}\cdot \frac{\partial \mathrm{T}(Y_p^m)}{\partial Y_p^m}, \quad k=0,1,...,m-1
\end{equation}
\end{theorem}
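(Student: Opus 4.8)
The plan is to apply Theorem~\ref{disopt} verbatim, under the identification $x_k\leftrightarrow Y_p^k$, $f(x_k,u_k)\leftrightarrow G(P_{ij}^k,c_{ij})Y_p^k+E_{i_k}u_k$, $\phi(x_m)\leftrightarrow\mathrm{T}(Y_p^m)$ and $L(x_k,u_k)\leftrightarrow\epsilon\|u_k\|^2/\max\{0,\iota-k\}$. The first thing I would check is that $f$ really is a function of $(Y_p^k,u_k)$ alone: although the entries of $G(P_{ij}^k,c_{ij})$ involve the branch powers $P_{ij}^k$, equation~(\ref{dc_pfe}) together with Lemma~\ref{lem_pij} expresses each $P_{ij}^k$ as an explicit function of $Y_p^k$ (the injection vector $P^k$ being a fixed datum). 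Hence $Y_p^k\mapsto G(P_{ij}^k,c_{ij})Y_p^k$ is well defined, and it is differentiable because $g(\cdot,c_{ij})$ is differentiable by~(\ref{thre_fun}) and, by Lemma~\ref{lem_der}, the map $Y_p^k\mapsto(A^T\mathrm{diag}(Y_p^k)A)^{-1^*}$ is differentiable; write $\partial Y_p^{k+1}/\partial Y_p^k$ for the resulting Jacobian, which is precisely the $\partial f/\partial x_k$ of Theorem~\ref{disopt}.

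Next I would compute the four ingredients. Because $E_{i_k}$ is a $0$–$1$ diagonal matrix, $\partial f/\partial u_k=E_{i_k}$ with $E_{i_k}^T=E_{i_k}$ and $E_{i_k}^2=E_{i_k}$; because $L$ is state-independent, $\partial L/\partial x_k=\mathbf{0}_n$ and $\partial L/\partial u_k=\frac{2\epsilon}{\max\{0,\iota-k\}}u_k$; and $\partial\phi/\partial x_m=\partial\mathrm{T}(Y_p^m)/\partial Y_p^m$. Condition~3 of Theorem~\ref{disopt} then gives $E_{i_k}\lambda_{k+1}+\frac{2\epsilon}{\max\{0,\iota-k\}}u_k=\mathbf{0}_n$, i.e.
\[
u_k=-\frac{\max\{0,\iota-k\}}{2\epsilon}\,E_{i_k}\lambda_{k+1},
\]
while conditions~2 and~4 give the adjoint recursion $\lambda_k=(\partial Y_p^{k+1}/\partial Y_p^k)^T\lambda_{k+1}$ with terminal value $\lambda_m=\partial\mathrm{T}(Y_p^m)/\partial Y_p^m$.

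I would then unroll the adjoint recursion from step $m$ down to step $k+1$: the matrix chain rule gives $\lambda_{k+1}=\big(\prod_{s=0}^{m-k-2}\partial Y_p^{m-s}/\partial Y_p^{m-s-1}\big)\,\partial\mathrm{T}(Y_p^m)/\partial Y_p^m$ (an empty product, i.e. the identity, when $k=m-1$), up to the transpose convention used for the Jacobian blocks. Substituting this into the displayed expression for $u_k$ produces~(\ref{con_input}); substituting~(\ref{con_input}) into the state equation~(\ref{state_eq}) and using $E_{i_k}^2=E_{i_k}$ produces~(\ref{con_sys}). Note that with $Y_p^0$ prescribed, (\ref{con_sys}) is a square system of $m$ vector equations in the $m$ unknown vectors $Y_p^1,\dots,Y_p^m$ once the $P_{ij}^k$ are eliminated through Lemma~\ref{lem_pij}, so the statement is exactly the set of necessary conditions rewritten in closed algebraic form.

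The Jacobian/transpose bookkeeping is routine; the point that needs care is the degenerate weight. For $k\ge\iota$ one has $\max\{0,\iota-k\}=0$, so the summand $\|u_k\|^2/\max\{0,\iota-k\}$ in~(\ref{cost_fun}) has to be read as the hard constraint $u_k=\mathbf{0}_n$ (any $u_k\neq\mathbf{0}_n$ makes $J=+\infty$); I would treat those steps separately and verify that~(\ref{con_input}) and~(\ref{con_sys}) remain valid because the prefactor $\max\{0,\iota-k\}$ annihilates the right-hand side and~(\ref{con_sys}) collapses to $Y_p^{k+1}=G(P_{ij}^k,c_{ij})Y_p^k$. A secondary caveat is that the composite state map is only piecewise smooth, since the island partition in Definition~\ref{def} (hence $(\cdot)^{-1^*}$) can jump under large perturbations, so the necessary conditions are understood at configurations where the network topology is locally constant.
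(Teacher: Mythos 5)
Your proposal is correct and follows essentially the same route as the paper's Appendix proof: identify the cascading dynamics with the state equation of Theorem~\ref{disopt}, obtain $u_k=-\frac{\max\{0,\iota-k\}}{2\epsilon}E_{i_k}\lambda_{k+1}$ from the stationarity condition, unroll the costate recursion $\lambda_k=(\partial Y_p^{k+1}/\partial Y_p^k)^T\lambda_{k+1}$ from the terminal condition $\lambda_m=\partial\mathrm{T}(Y_p^m)/\partial Y_p^m$, and substitute back into the state equation. The only substantive difference is that the paper goes on to write out the entries of the Jacobian $\partial Y_p^{k+1}/\partial Y_p^k$ explicitly via the chain rule through $g$ and $P^k_{i_lj_l}$ using Lemmas~\ref{lem_pij} and~\ref{lem_der}, whereas you invoke those lemmas only to assert differentiability; your explicit treatment of the degenerate weight $\max\{0,\iota-k\}=0$ and of the piecewise smoothness of the island partition are points the paper passes over silently.
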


\begin{proof}
See Appendix.
\end{proof}

It is necessary to winnow the solutions to Equation (\ref{con_sys}), since they just satisfy necessary conditions for optimal control problem (\ref{cost}). Thus, we introduce a search algorithm to explore the optimal control input or initial disturbances. Table \ref{ISA} presents the implementation process of the Iterative Search Algorithm (ISA) in details. First of all, we set the maximum iterative steps $i_{max}$ of the ISA  and the initial value of cost function $J^*$, which is a sufficiently large number $J_{\max}$ and is larger than the maximum value of the cost function. The solution to the system of algebraic equation (\ref{con_sys}) allows us to obtain the control input $u^i$ from (\ref{con_input}). Then we compute the cost function $J^i$ from (\ref{cost}) by adding the control input $u^i$ in power systems. Then $J^*$ and $u^*$ are replaced with $J^i$ and $u^i$ if $J^i$ is less than $J^*$. Afterwards, the algorithm goes to the next iteration and starts solving the system of algebraic equation (\ref{con_sys}) once again.

\begin{table}
 \caption{\label{ISA} Iterative Search Algorithm.}
 \begin{center}
 \begin{tabular}{lcl} \hline
  1: Set the maximum number of steps $i_{\max}$, $i=0$ and $J^*=J_{\max}$ \\
  2: \textbf{while} ($i<=i_{\max}$) \\
  3: ~~~~~~~Solve the system of algebraic equation (\ref{con_sys})  \\
  4: ~~~~~~~Compute the control input $u^i$ from (\ref{con_input})  \\
  5: ~~~~~~~Validate the control input $u^i$ in (\ref{state_eq})\\
  6: ~~~~~~~Compute the resulting cost function $J^i$ from (\ref{cost_fun}) \\
  7: ~~~~~~~\textbf{if} ($J^i<J^*$)  \\
  8: ~~~~~~~~~~~Set $u^*=u^i$ and $J^*=J^i$ \\
  9: ~~~~~~~\textbf{end if}  \\
 10: ~~~~~Set $i=i+1$ \\
 11: \textbf{end while} \\ \hline
 \end{tabular}
 \end{center}
\end{table}

Regarding the Iterative Search Algorithm in Table \ref{ISA}, we have the following theoretical result.
\begin{theorem}
The Iterative Search Algorithm in Table \ref{ISA} ensures that the cost function $J^*$ and control input $u^*$ converge to the optima as the iteration steps $i_{\max}$ go to infinity.
\end{theorem}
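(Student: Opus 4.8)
The plan is to combine a monotone-convergence argument for the scalar sequence of incumbent cost values with a covering argument for the solution set of the necessary conditions derived in Theorem~\ref{sysAE}. First I would observe that the sequence $\{J^*\}$ produced by the loop in Table~\ref{ISA} is, by construction, non-increasing: line~8 overwrites $J^*$ only when a strictly smaller value $J^i$ is encountered, and $J^*$ is initialized to $J_{\max}$. Next I would show the sequence is bounded below. Writing $J^{\mathrm{opt}}=\inf_{u_k}J(Y_p^m,u_k)$, every value $J^i$ evaluated in line~6 is the cost attained by an admissible control $u^i$, so $J^i\ge J^{\mathrm{opt}}$, and $J^{\mathrm{opt}}>-\infty$ because $\mathrm{T}(Y_p^m)$ is bounded below (it is a fixed differentiable function of the bounded final admittance vector) and the control-energy term in (\ref{cost_fun}) is non-negative. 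A bounded non-increasing real sequence converges, so $J^*\to\bar J\ge J^{\mathrm{opt}}$; it then remains to prove $\bar J=J^{\mathrm{opt}}$ and that the stored $u^*$ approaches a minimizer.

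For that, I would first invoke existence of a global minimizer $u^{\mathrm{opt}}$ of (\ref{cost}): continuity of the cost in the $u_k$ (the threshold function $g$ in (\ref{thre_fun}) is differentiable, hence the composed map from controls to $Y_p^m$ is continuous, as is $\mathrm{T}$) together with coercivity of the penalty on $u_0,\dots,u_{\iota-1}$ confines the optimization to a compact set, so an optimum is attained. By Theorem~\ref{sysAE}, $u^{\mathrm{opt}}$ satisfies the necessary conditions, i.e.\ the state-costate trajectory it induces is a solution of the algebraic system (\ref{con_sys}). The key step is to argue that the repeated solving of (\ref{con_sys}) in line~3, run with initializations that range over a family becoming dense in the relevant parameter space — the branch-selection indices $i_k$ run over the finitely many branches, while the continuous initial guess supplied to the algebraic solver is refined across iterations — yields a sequence of candidate solutions whose closure contains $u^{\mathrm{opt}}$. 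Granting this, continuity of $J$ gives $\liminf_i J^i\le J(Y_p^m,u^{\mathrm{opt}})=J^{\mathrm{opt}}$, and combined with $J^i\ge J^{\mathrm{opt}}$ and monotonicity this forces $\bar J=J^{\mathrm{opt}}$.

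Finally, for the control input I would extract a subsequence $i_\ell$ along which $u^{i_\ell}\to u^{\mathrm{opt}}$ and $J^{i_\ell}\to J^{\mathrm{opt}}$; along it line~8 eventually replaces $u^*$ by $u^{i_\ell}$ each time $J^{i_\ell}$ dips below the current incumbent, so $u^*\to u^{\mathrm{opt}}$ (if the minimizer is not unique, the same argument shows $u^*$ approaches the set of minimizers, while $J^*\to J^{\mathrm{opt}}$ still holds). I expect the main obstacle to be making the covering step rigorous: the assertion that $\{u^i\}$ eventually comes arbitrarily close to every solution of (\ref{con_sys}) is valid only under an explicit hypothesis on how the algebraic system is solved — e.g.\ that the solver is restarted from a family of initial points that is dense (or almost surely dense, if randomized) and is locally convergent near each root — and one must also exclude the pathological case of a minimizer sitting at a root unreachable by the chosen Newton/homotopy scheme. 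A secondary technical point is the factor $\max\{0,\iota-k\}$ in the denominator of (\ref{cost_fun}) for $k\ge\iota$, which has to be read as the constraint $u_k=0$ for $k\ge\iota$; this must be stated carefully so that the admissible set over which $J^{\mathrm{opt}}$ is defined, and over which coercivity and compactness are invoked, is precisely the correct one.
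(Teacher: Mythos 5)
Your proposal takes essentially the same route as the paper: the paper's own proof is exactly the monotone-convergence argument ($J^*$ is non-increasing by construction and bounded below by $0$, hence convergent by the monotone convergence theorem), followed by the one-sentence assertion that because (\ref{con_sys}) is solved from a random initial condition at each iteration, $J^*$ and $u^*$ converge to the optima. The covering step you single out as the main obstacle is indeed the crux — the paper asserts it without the density/local-convergence hypotheses on the solver restarts (or the existence and attainment of a minimizer satisfying the necessary conditions) that you correctly identify as needed to make the argument rigorous.
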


\begin{proof}
The ISA in Table \ref{ISA} indicates that the cost function $J^*$ decreases monotonically  as time goes. Considering that $J^*$ is the lower bounded ($i.e.$, $J^*\geq0$), it can be proved that $J^*$ converges to the infimum according to monotone convergence theorem in real analysis \cite{yeh06}. For each iteration, the system of algebraic equation (\ref{con_sys}) is solved with a random initial condition. As a result, the cost function $J^*$ and control input $u^*$ converge to the optima as the iteration steps $i_{\max}$ go to infinity.
\end{proof}

\section{Simulation and Validation}\label{sec:sim}

In this section, we implement the proposed Disturbance Identification Algorithm to search for the disruptive disturbances added on selected branches in IEEE 9 Bus System and IEEE 14 Bus System. The numerical results on disruptive disturbances
are validated by disturbing the selected branch with the computed magnitude of disturbance in the corresponding IEEE Bus Systems. To sever as many branches as possible, we define the terminal constraint in cost function (\ref{cost_fun}) as follows
$$
T(Y_p^m)=\frac{1}{2}\|Y_p^m\|^2
$$
and derive its partial derivative with respect to $Y_p^m$
\begin{equation}\label{part_term}
\frac{\partial\mathrm{T}(Y_p^m)}{\partial Y_p^m}=Y_p^m
\end{equation}
By substituting (\ref{part_term}) into (\ref{con_sys}), we obtain the desired system of algebraic equations.
\begin{equation}\label{sys_yp}
Y_p^{k+1}-G(P^k_{ij},c_{ij})Y_p^k+\frac{\max\{0,\iota-k\}}{2\epsilon}E_{i_k}\prod_{s=0}^{m-k-2}\frac{\partial Y_p^{m-s}}{\partial Y_p^{m-s-1}}Y_p^m=\mathbf{0}_n, \quad k=0,1,...,m-1.
\end{equation}

\subsection{IEEE 9 Bus System}

The parameter settings of IEEE 9 Bus System (see Fig. \ref{ieee9b}) are presented in Table \ref{tab:branch9} and Table \ref{tab:bus9} \cite{zim11}. It is worth noting that R represents the reference bus (slack bus). G refers to the generator bus and L stands for the load bus in Table \ref{tab:bus9}. Per unit values are adopted with the base value $100$MVA. Other parameters for the dynamic model of power system are given as $\sigma=5\times10^4$, $\iota=1$, $\epsilon=10^{-4}$, $i_{\max}=10$, $J_{\max}=10^6$ and $m=9$. The solver ``fsolve" in Matlab is employed to solve the system of algebraic equations (\ref{sys_yp}).

\begin{table}[h!]
  \centering
  \caption{IEEE 9 Bus System-Branch data.}
  \label{tab:branch9}
  \begin{tabular}{ccccc}
    \toprule
    Branch number & Source bus & Sink bus & Reactance & Power threshold \\
    \midrule
      1 & 1 & 4 & 0.058 & 1.0 \\
      2 & 2 & 7 & 0.092 & 1.8 \\
      3 & 3 & 9 & 0.170 & 1.0 \\
      4 & 4 & 5 & 0.059 & 0.5 \\
      5 & 4 & 6 & 0.101 & 0.5 \\
      6 & 7 & 5 & 0.072 & 1.0 \\
      7 & 7 & 8 & 0.063 & 1.0 \\
      8 & 9 & 6 & 0.161 & 1.0 \\
      9 & 9 & 8 & 0.085 & 1.0 \\
    \bottomrule
  \end{tabular}
\end{table}

\begin{table}[h!]
  \centering
  \caption{IEEE 9 Bus System-Bus data.}
  \label{tab:bus9}
  \begin{tabular}{ccc}
    \toprule
    Bus number & Bus type & Power injection \\
    \midrule
      1 & R & 0.71  \\
      2 & G & 1.63  \\
      3 & G & 0.85  \\
      4 & L & 0  \\
      5 & L & -1.25  \\
      6 & L & -0.9  \\
      7 & L & 0   \\
      8 & L & -1  \\
      9 & L & 0   \\
    \bottomrule
  \end{tabular}
\end{table}

\begin{figure}
\scalebox{0.9}[0.9]{\includegraphics{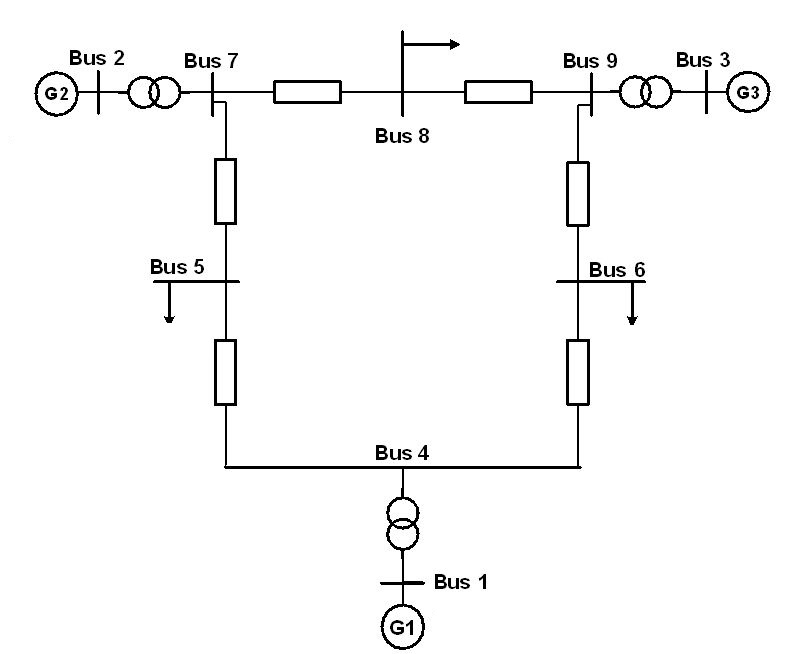}}\centering
\caption{\label{ieee9b} IEEE 9 Bus System.}
\end{figure}

Figure \ref{id9} shows the computed disturbance and corresponding cost for each branch by the ISA in Table \ref{ISA}. It is observed that the disturbance on Branch 2 results in the least cost, which indicates the most outage branches in the final step. In particular, Fig. \ref{ju} presents the time evolution while applying the ISA to search for the desired disturbance
or optimal control input on Branch 2 in 10 rounds. After 7 rounds, the cost function is lowered greatly to the bottom and keeps invariant afterwards. Correspondingly, the computed control input converges to $10.87$, which exactly severs Branch 2.

\begin{figure}\centering
 {\includegraphics[width=0.41\textwidth]{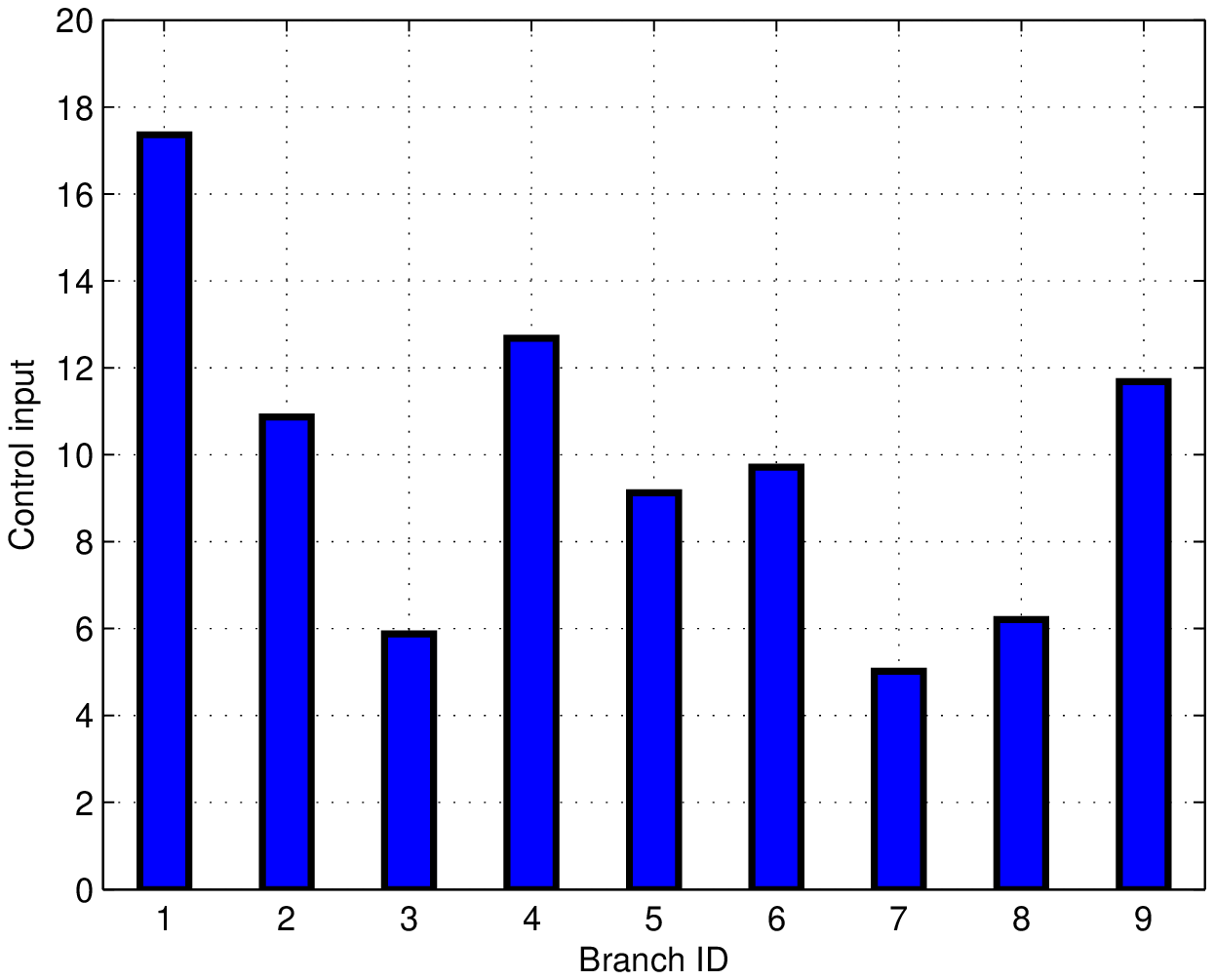}}
 {\includegraphics[width=0.41\textwidth]{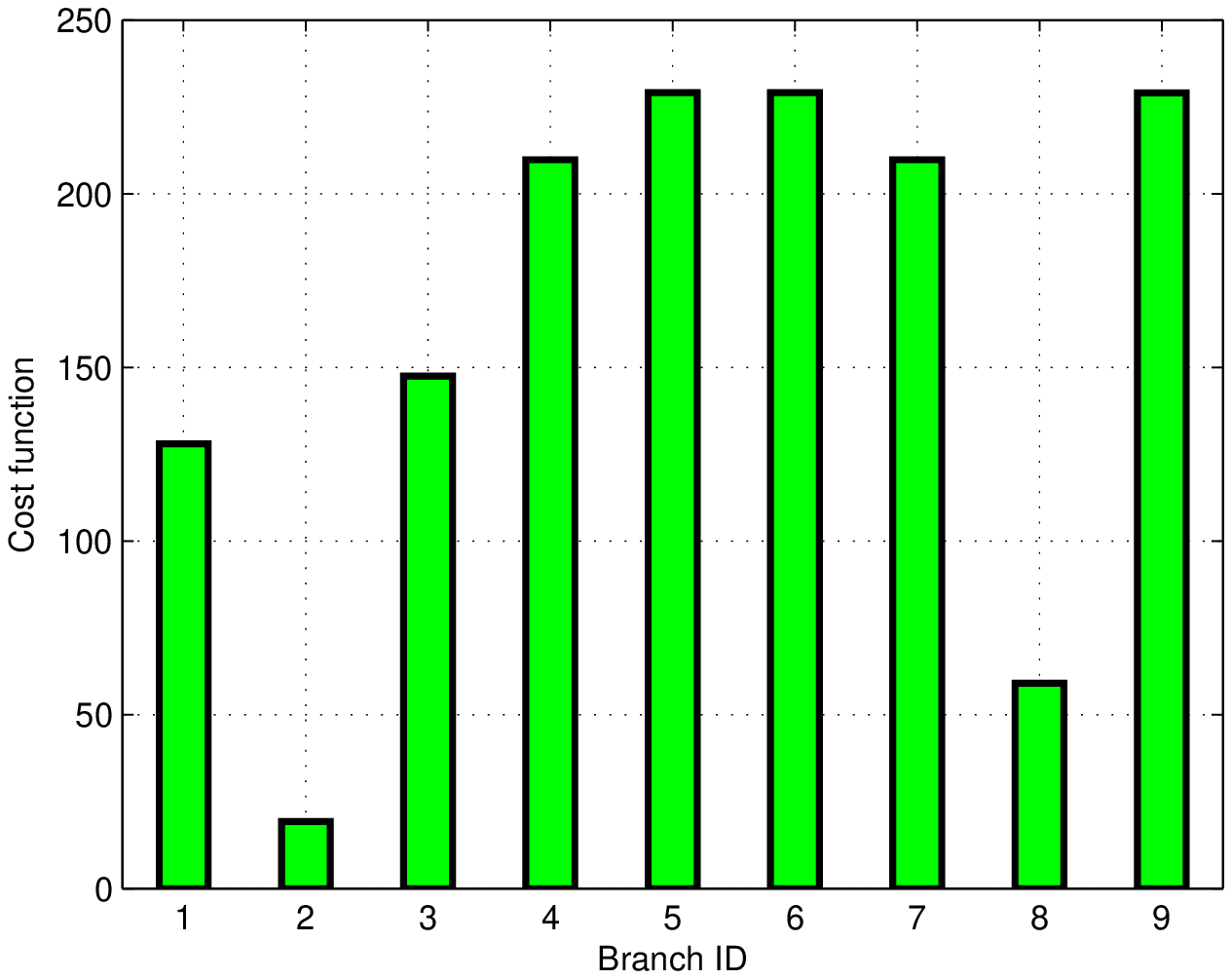}} \\
 \caption{\label{id9} Control input and the resulted cost on each branch in IEEE 9 Bus System.}
\end{figure}

\begin{figure}\centering
 {\includegraphics[width=0.41\textwidth]{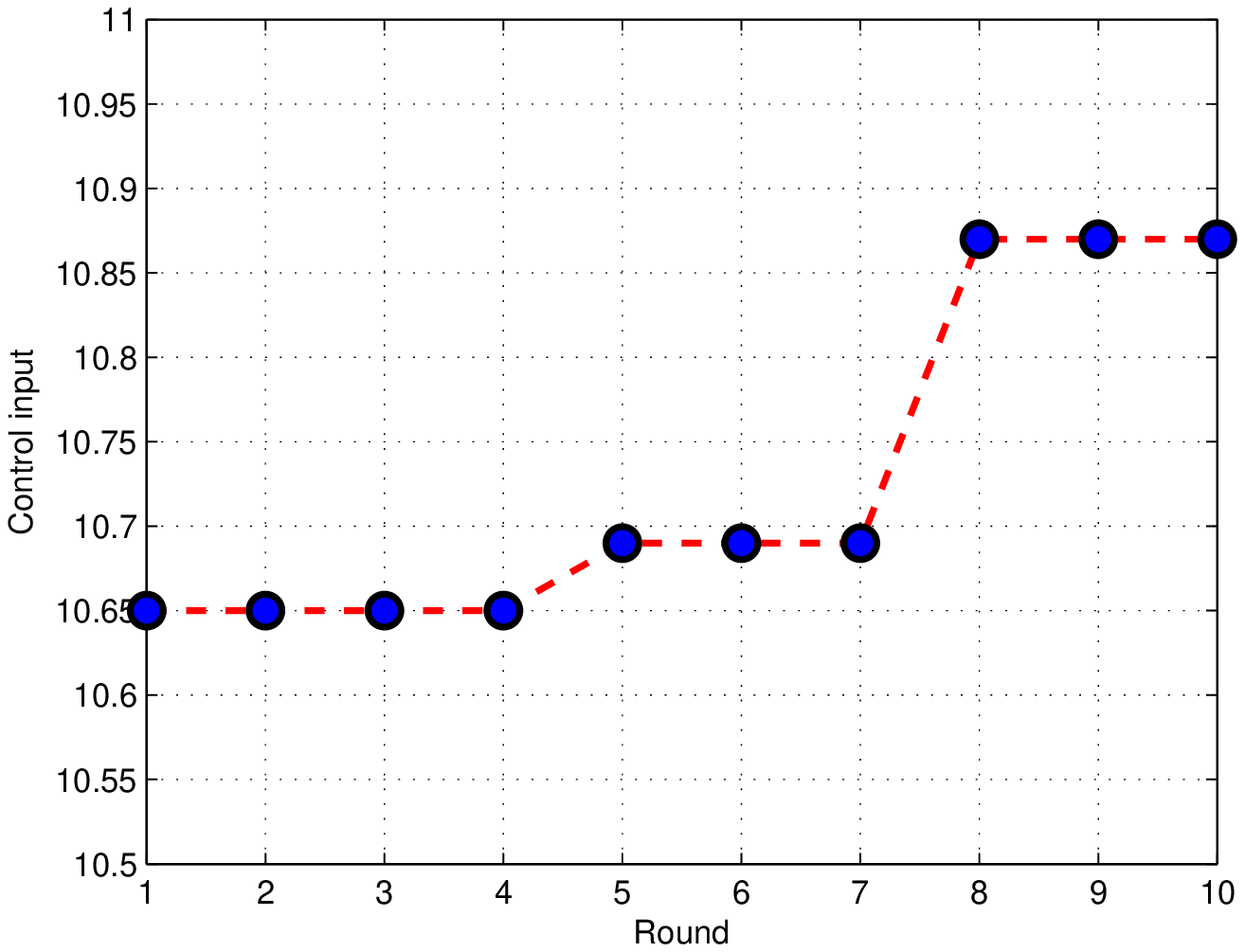}}
 {\includegraphics[width=0.41\textwidth]{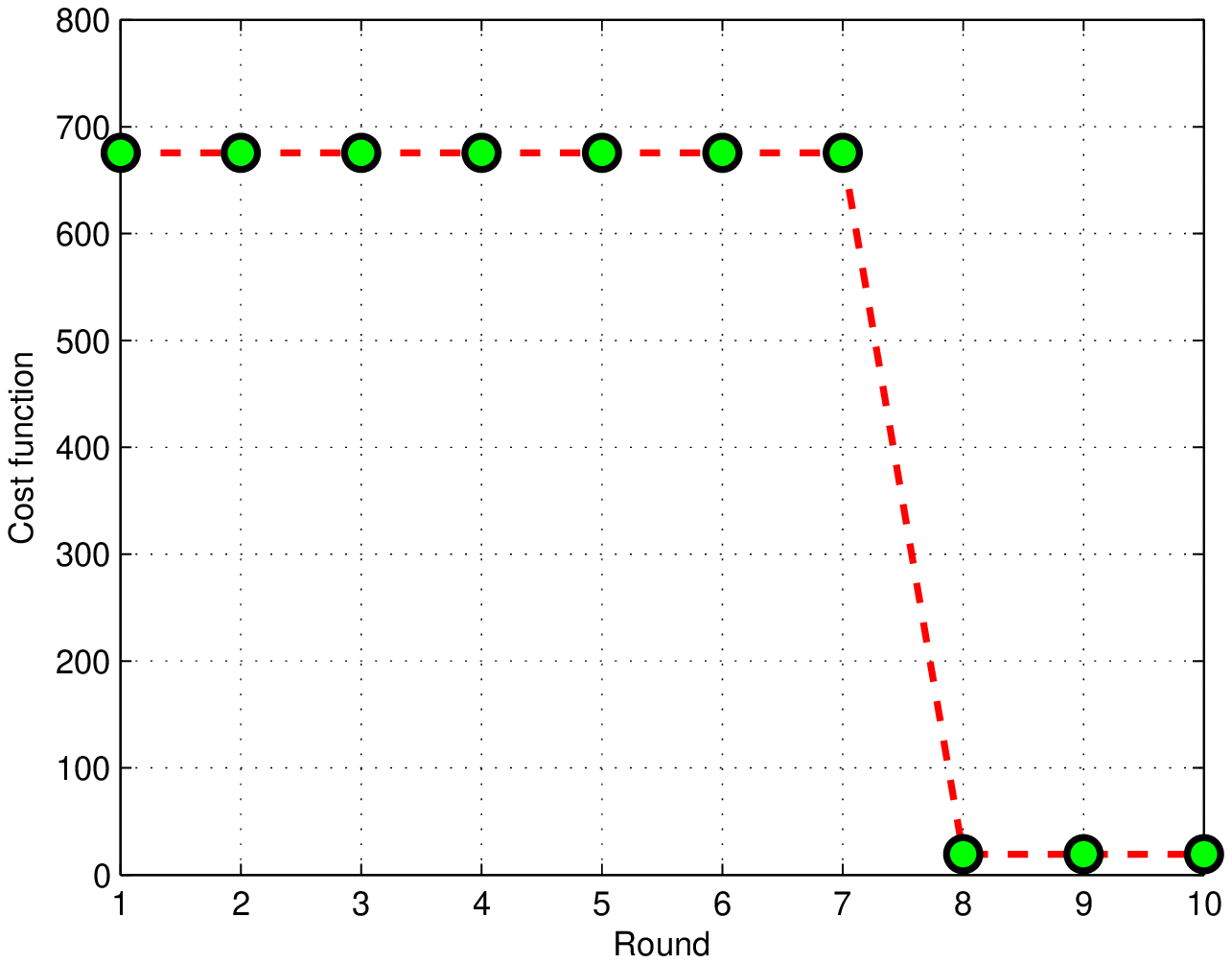}} \\
 \caption{\label{ju} Time evolution of control input and the resulted cost on Branch $2$.}
\end{figure}

Branch $2$ is selected to add the disruptive disturbance that initiates the chain reaction of cascading blackout. In Fig. \ref{ieee9bus}, red balls denote the generator buses, and green ones refer to the load buses. Bus identity (ID) numbers and branch ID numbers are marked as well. The arrows represent the power flow on each branch. A branch is severed once its transmission power exceeds the given threshold. The arrow disappears if there is no power transmission on the branch. The power system is running in the normal state at Step 1. Then the disruptive disturbance computed by the ISA (susceptance decrement 10.87) is added to sever Branch 2 at Step 2. Then Branch 1, Branch 4 and Branch 5 break off simultaneously at Step 3. Subsequently, Branch 3, Branch 6, Branch 7 and Branch 9 are removed from the power system at Step 4. As a result, the power network is divided into 8 islands without any power consumption. In particular, there is no power transmission on Branch 8 since Bus 6 and Bus 9 are both load buses.

\begin{figure}
\scalebox{0.65}[0.65]{\includegraphics{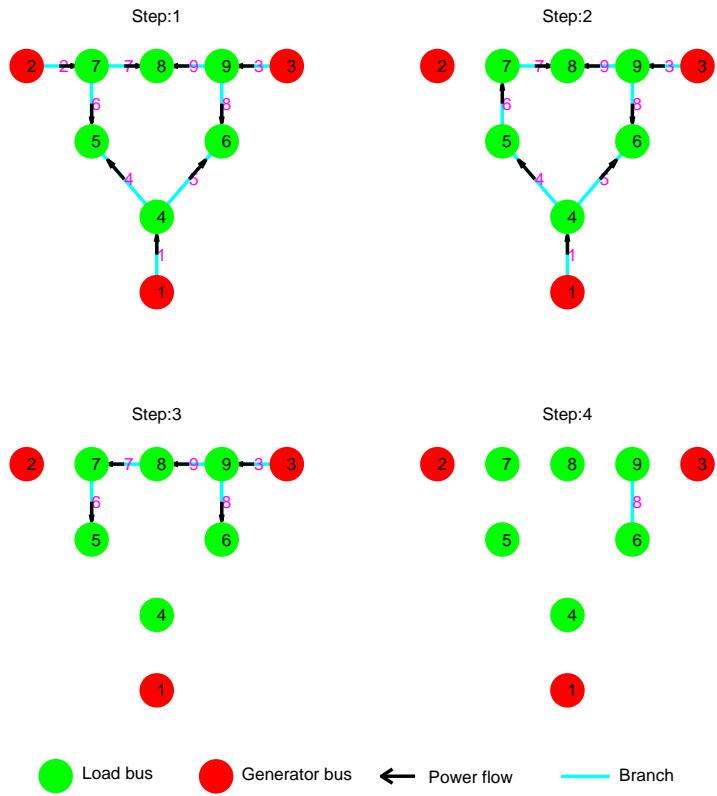}}\centering
\caption{\label{ieee9bus} Cascading process of the IEEE 9 Bus System under the computed initial disturbances on Branch 2.}
\end{figure}

\subsection{IEEE 14 Bus System}

\begin{table}
  \centering
  \caption{IEEE 14 Bus System-Branch data.}
  \label{tab:branch14}
  \begin{tabular}{ccccc}
    \toprule
    Branch number & Source bus & Sink bus & Reactance & Power threshold \\
    \midrule
      1 & 1 & 2 & 0.059 & 0.3 \\
      2 & 1 & 5 & 0.223 & 0.3 \\
      3 & 2 & 3 & 0.198 & 0.4 \\
      4 & 2 & 4 & 0.176 & 0.3 \\
      5 & 2 & 5 & 0.174 & 0.3 \\
      6 & 3 & 4 & 0.171 & 0.7 \\
      7 & 4 & 5 & 0.042 & 0.3 \\
      8 & 4 & 7 & 0.209 & 0.3 \\
      9 & 4 & 9 & 0.556 & 0.3 \\
      10 & 5 & 6 & 0.252 & 0.3 \\
      11 & 6 & 11 & 0.199 & 0.3 \\
      12 & 6 & 12 & 0.256 & 0.3 \\
      13 & 6 & 13 & 0.130 & 0.3 \\
      14 & 7 & 8 & 0.176 & 0.3 \\
      15 & 7 & 9 & 0.110 & 0.3 \\
      16 & 9 & 10 & 0.085 & 0.3 \\
      17 & 9 & 14 & 0.270 & 0.3 \\
      18 & 10 & 11 & 0.192 & 0.3 \\
      19 & 12 & 13 & 0.200 & 0.3 \\
      20 & 13 & 14 & 0.348 & 0.3 \\
    \bottomrule
  \end{tabular}
\end{table}

\begin{table}[h!]
  \centering
  \caption{IEEE 14 Bus System-Bus data.}
  \label{tab:bus14}
  \begin{tabular}{ccc}
    \toprule
    Bus number & Bus type & Power injection \\
    \midrule
      1 & R & 0  \\
      2 & G & 0.217  \\
      3 & G & 0.942  \\
      4 & L & -0.478  \\
      5 & L & -0.076  \\
      6 & G & 0.112  \\
      7 & L & 0  \\
      8 & G & 0  \\
      9 & L & -0.295  \\
     10 & L & -0.090  \\
     11 & L & -0.035  \\
     12 & L & -0.061  \\
     13 & L & -0.135  \\
     14 & L & -0.149  \\
    \bottomrule
  \end{tabular}
\end{table}

\begin{figure}
\scalebox{0.8}[0.8]{\includegraphics{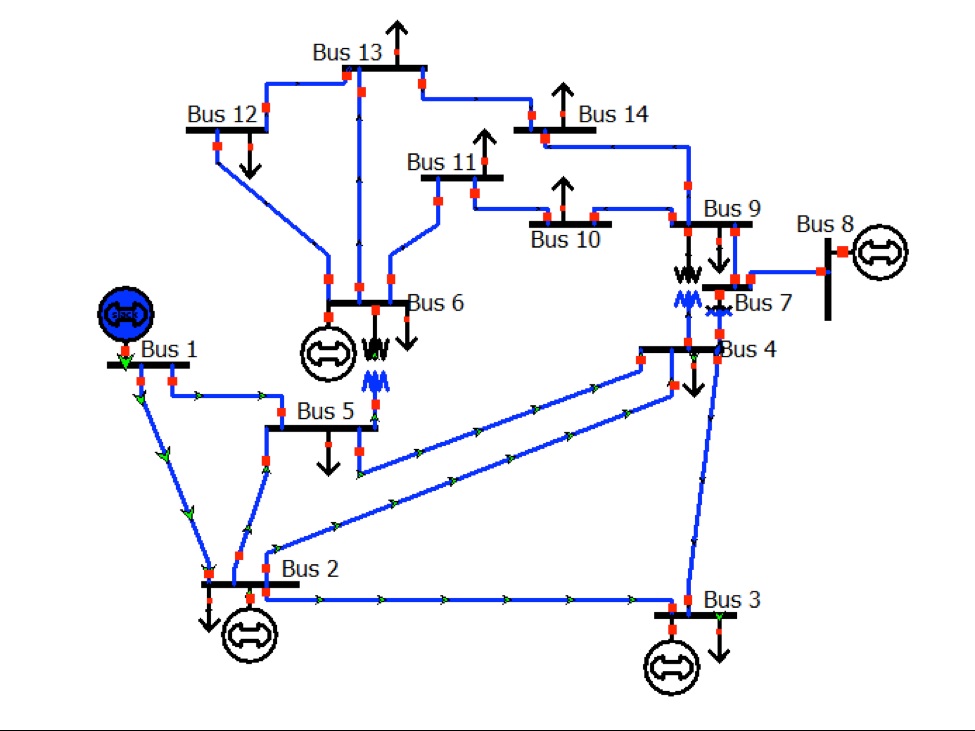}}\centering
\caption{\label{sim14bus} IEEE 14 Bus System.}
\end{figure}

\begin{figure}\centering
 {\includegraphics[width=0.48\textwidth]{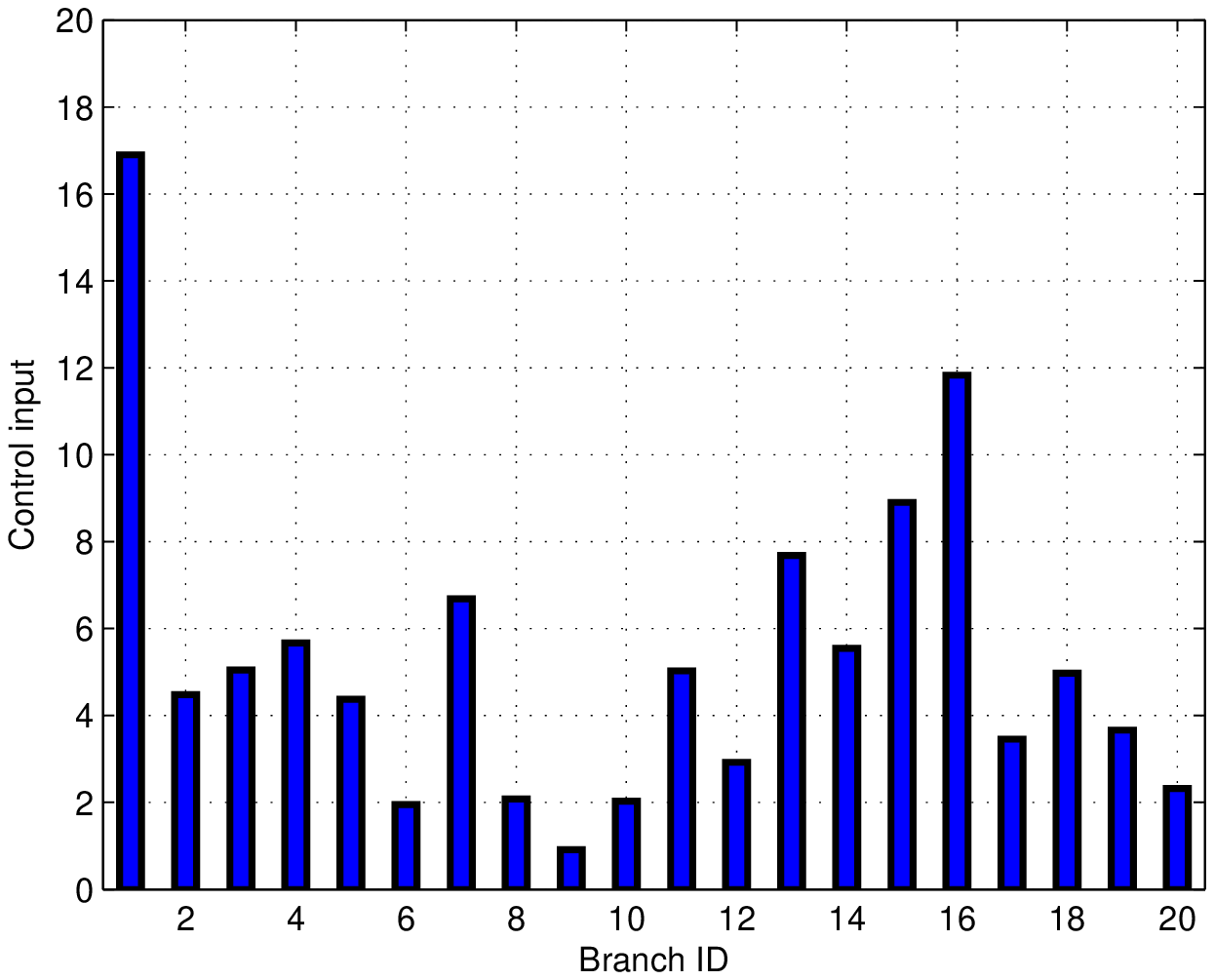}}
 {\includegraphics[width=0.48\textwidth]{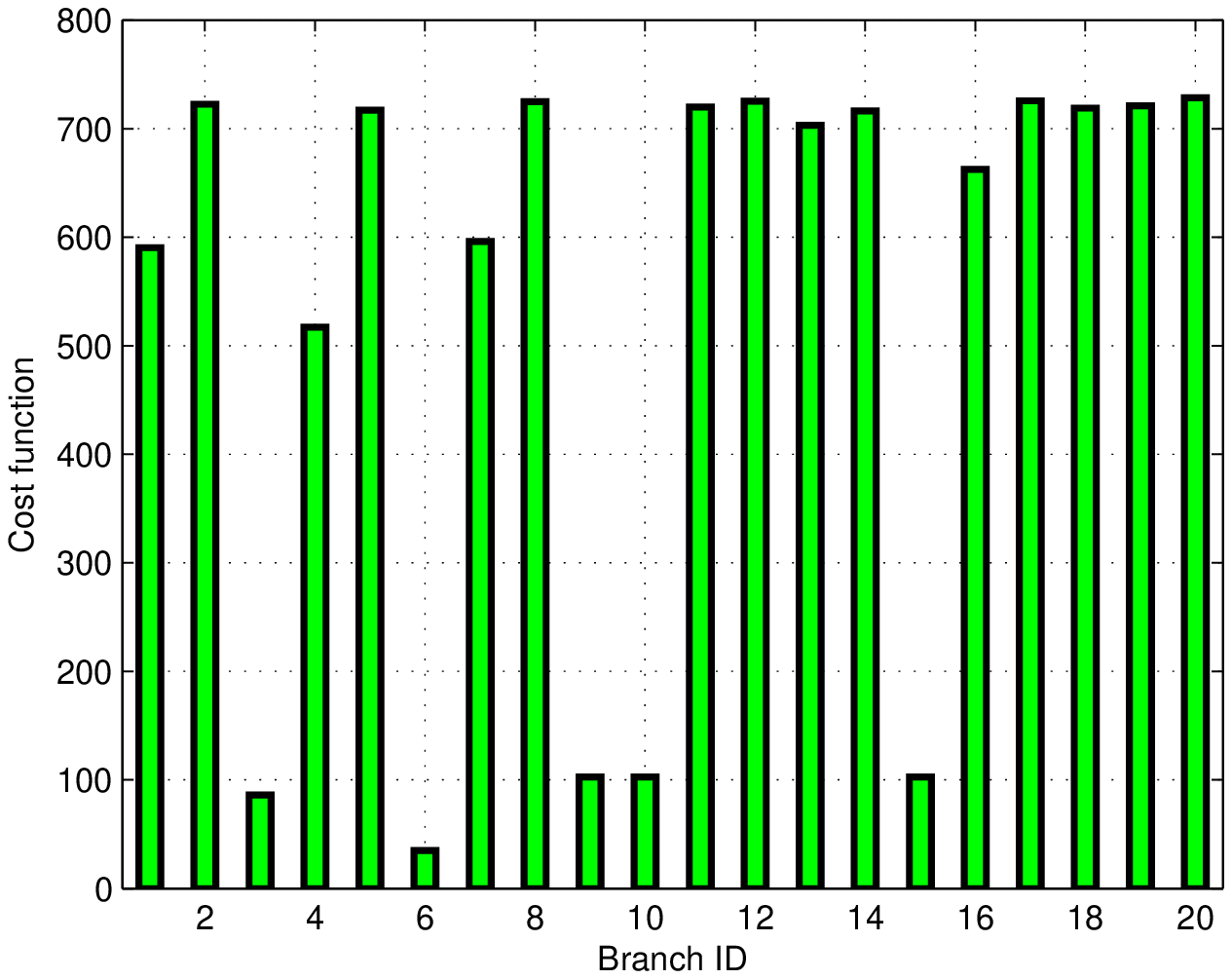}} \\
 \caption{\label{id14} Control input and the resulted cost on each branch of the IEEE 14 Bus System.}
\end{figure}

\begin{figure}\centering
 {\includegraphics[width=0.45\textwidth]{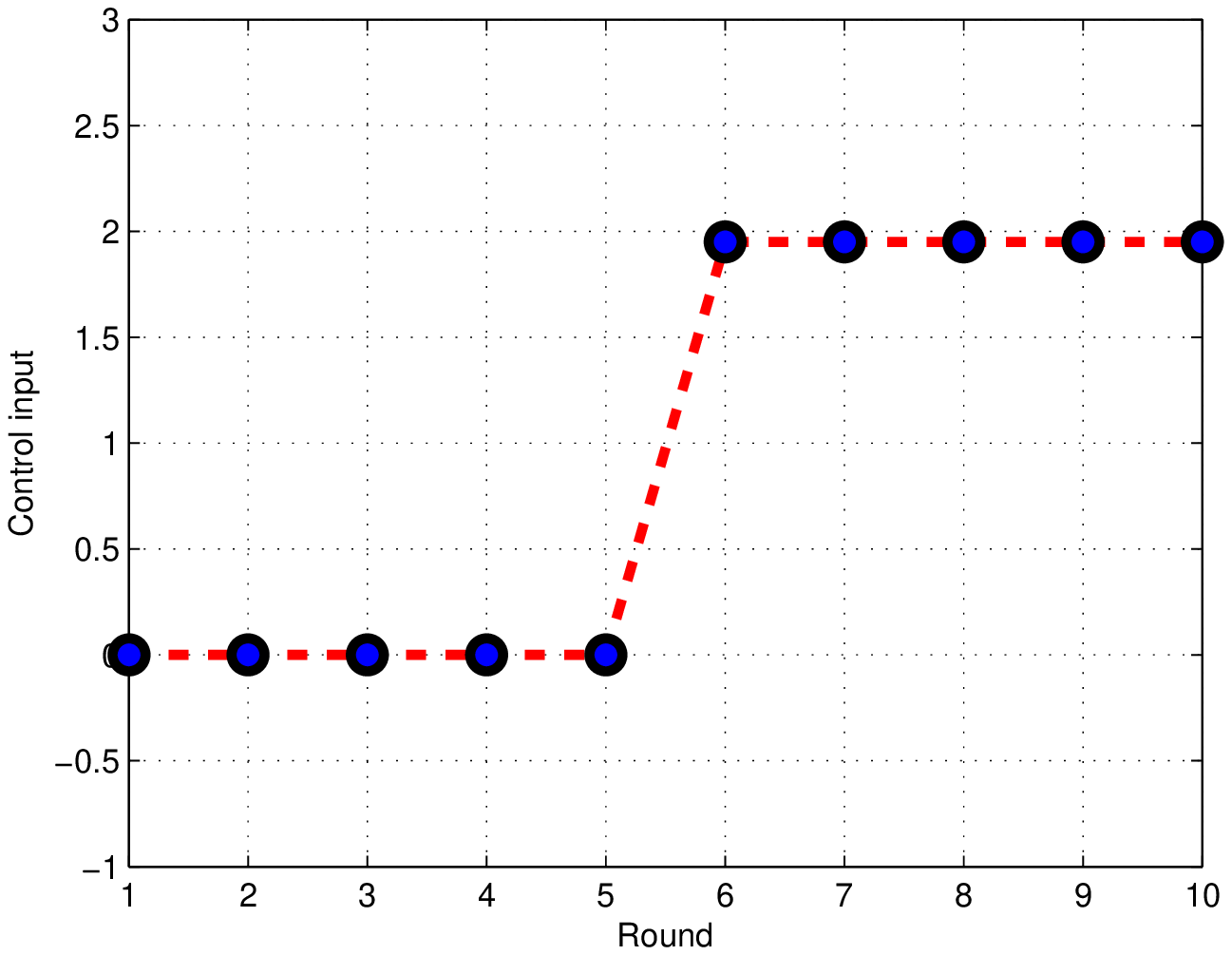}}
 {\includegraphics[width=0.45\textwidth]{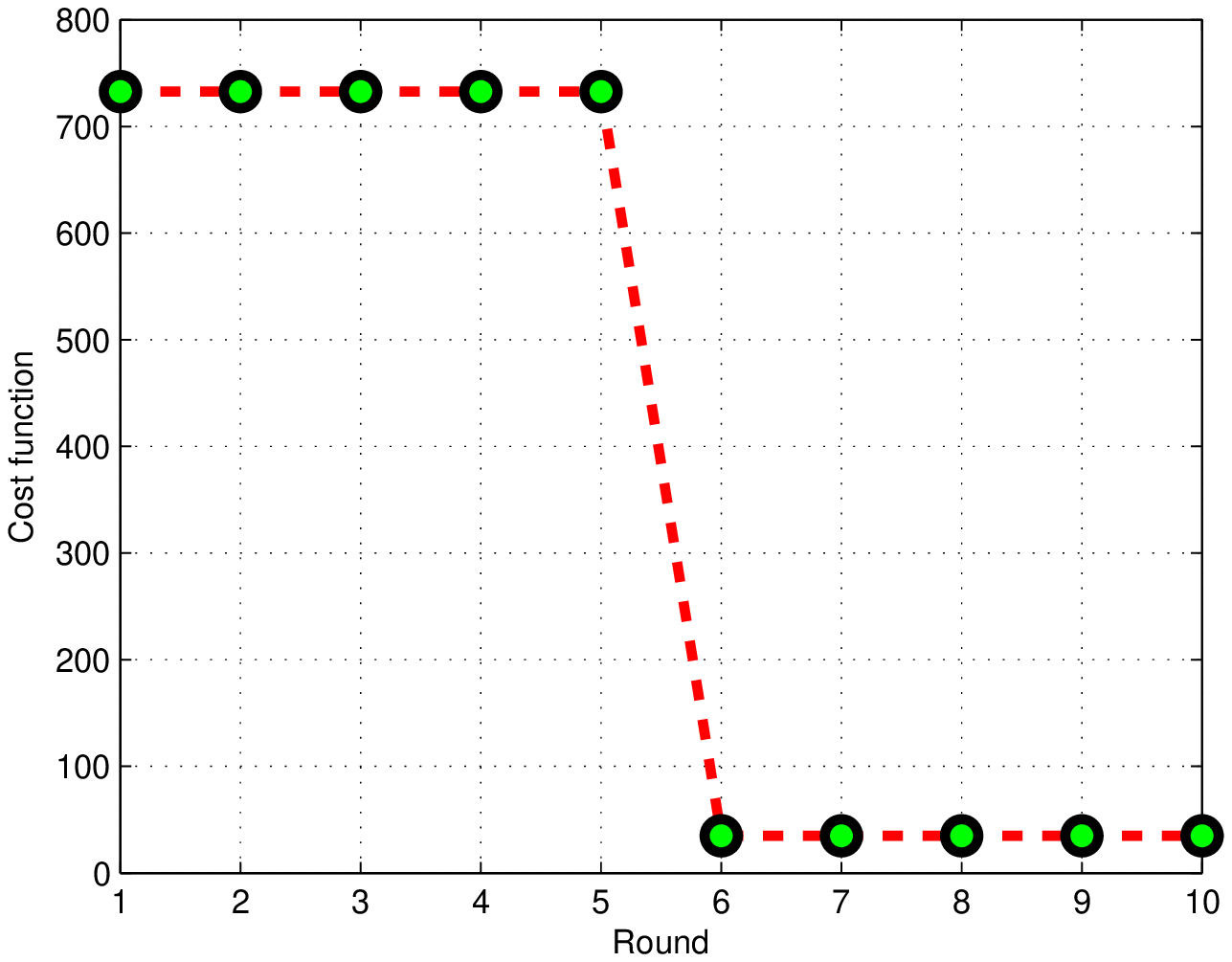}} \\
 \caption{\label{ju14} Time evolution of control input and the resulted cost on Branch $6$.}
\end{figure}

The ISA is also implemented on the IEEE 14 Bus System (see Fig. \ref{sim14bus}) to trace the initial disturbance on branches that result in the worst blackout of power network. The relevant branch data and bus data are shown in Table \ref{tab:branch14} and Table \ref{tab:bus14}, respectively \cite{zim11}. Other parameters for are given as follows: $\sigma=5\times 10^4$, $\epsilon=10^{-4}$, $\iota=1$, $i_{\max}=10$, $J_{\max}=10^6$ and $m=10$. Figure \ref{id14} presents the computed control input on each branch and the resulted cost level at the final step. Of all the computed disturbances, we can observe that the disturbance on Branch $6$ (red link) leads to the least value ($34.87$) of cost function, which implies the worst blackout of power networks. The process of iterative search for the least cost value and the corresponding control input is illustrated in Fig. \ref{ju14}. In particular, the cascading process caused by the initial admittance change of $1.95$ on Branch $6$ is shown in Fig. \ref{ieee14bus}. The process ends up with $2$ connected subnetworks and $8$ isolated buses after $6$ cascading steps. The subnetwork with one generator bus (Bus 6) and $3$ load buses (Bus 5, Bus 12 and Bus 13) is still in operation, while the other one with two load buses (Bus 9 and Bus 14) stops running due to the lack of power supply.

\begin{figure}
\scalebox{1.0}[1.0]{\includegraphics{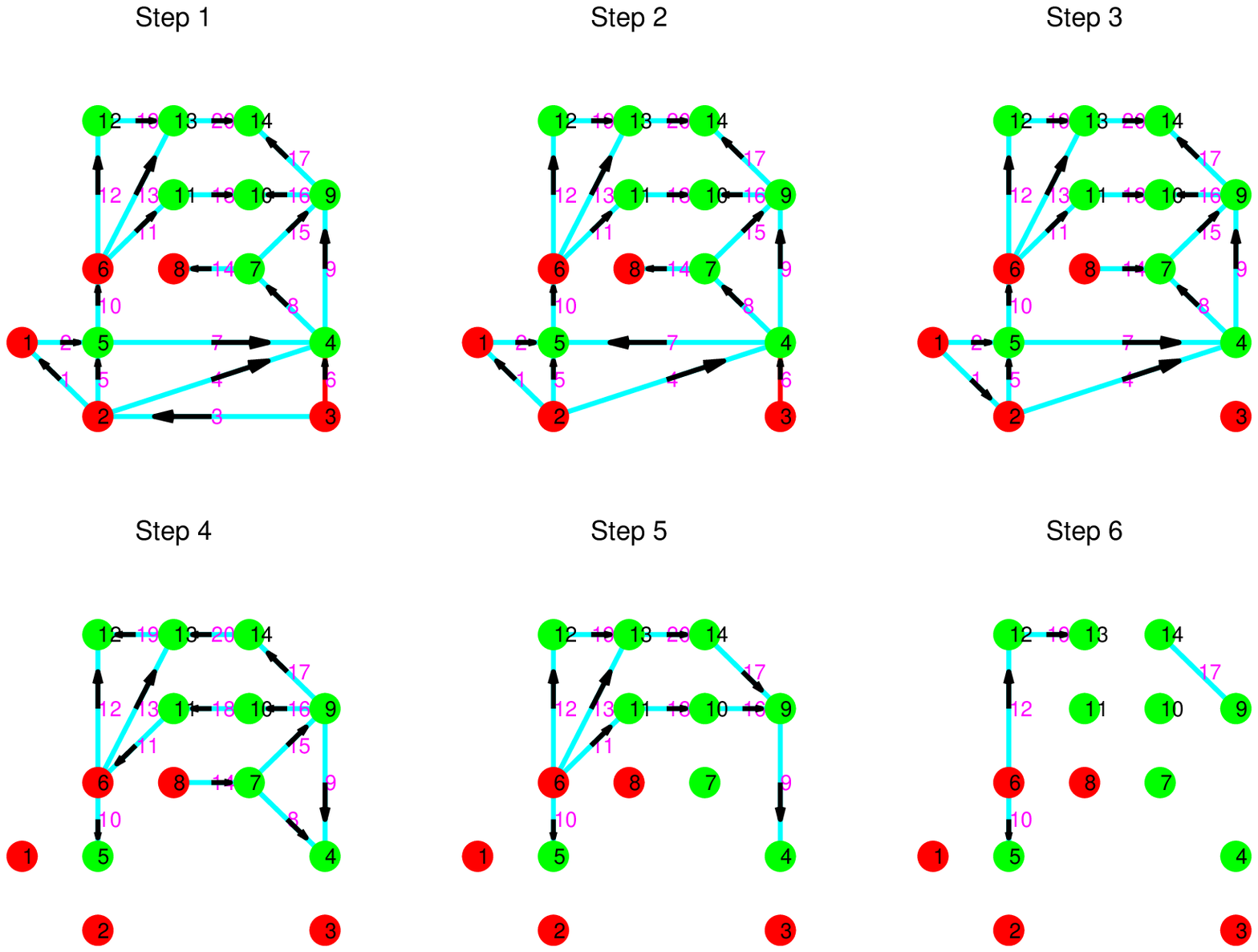}}\centering
\caption{\label{ieee14bus} Cascading process of the IEEE 14 Bus System under the computed initial  disturbances on Branch 6.}
\end{figure}


The validation results on IEEE 9 Bus System and IEEE 14 Bus System demonstrate the power network can be completely destroyed by disruptive disturbances on branches. In the simulations, the convergence rate of the ISA strongly depends on the initial condition of solving the system of algebraic equation (\ref{con_sys}) in each iteration.

\section{Conclusions}\label{sec:con}

A cascading model of transmission lines was developed to describe the evolution of branches on power systems under disruptive contingencies. With the cascading model and DC power flow equation, the identification problem of worst case cascading failures was formulated with the aid of optimal control theory by treating the disturbances as the control inputs. Simulation results demonstrate the effectiveness of our approach. The proposed approach allows us to determine the most disruptive disturbances on the targeted branch, which provides a new perspective of designing the corresponding protection strategy to enhance the resilience and stability of power system and interdependent critical infrastructure systems. Future work includes improving the cascading dynamics of power system with AC power flow equation and designing cooperative control strategies of protective relay to protect power systems \cite{hong11,zhai12}.

\section*{Acknowledgment}

This work is partially supported by the Future Resilience System Project at the Singapore-ETH Centre (SEC), which is funded by the National Research Foundation of Singapore (NRF) under its Campus for Research Excellence and Technological Enterprise (CREATE) program. It is also supported by Ministry of Education of Singapore under Contract MOE2016-T2-1-119.

\section*{Appendix}


Now we present the proof of Theorem \ref{sysAE}.  From Theorem \ref{disopt}, the necessary conditions for the optimal control problem (\ref{cost}) can be determined as
\begin{equation}\label{cond_st}
Y_p^{k+1}=G(P_{ij}^k,c_{ij})\cdot Y_p^{k}+E_{i_k}u_k
\end{equation}

\begin{equation}\label{cond_u}
\left(\frac{\partial Y_p^{k+1}}{\partial u_k}\right)^T\lambda_{k+1}+\frac{\epsilon }{\max\{0,\iota-k\}}\cdot\frac{\partial\|u_k\|^2}{\partial u_k}=0
\end{equation}

\begin{equation}\label{cond_y}
\lambda_k=\left(\frac{\partial Y_p^{k+1}}{\partial Y_p^k}\right)^T\lambda_{k+1}+\frac{\epsilon }{\max\{0,\iota-k\}}\cdot\frac{\partial\|u_k\|^2}{\partial Y_p^k}
\end{equation}

\begin{equation}\label{cond_final}
\frac{\partial \mathrm{T}(Y_p^m)}{\partial Y_p^m}-\lambda_m=\mathbf{0}_n
\end{equation}
where $\mathbf{0}_n=(0,0,...,0)^T \in R^n$. Thus, solving Equation (\ref{cond_u}) leads to
\begin{equation}\label{cond_uk}
u_k=-E_{i_k}\frac{\lambda_{k+1}}{2\epsilon}\max\{0,\iota-k\}
\end{equation}
and simplifying Equation (\ref{cond_y}) yields
\begin{equation}\label{cond_lamb}
\lambda_k=\left(\frac{\partial Y_p^{k+1}}{\partial Y_p^k}\right)^T\lambda_{k+1}
\end{equation}
with the final condition $\lambda_m=\frac{\partial \mathrm{T}(Y_p^m)}{\partial Y_p^m}$ being derived from Equation (\ref{cond_final}).
Therefore, we have
\begin{equation}\label{cond_lambd}
\lambda_{k+1}=\prod_{s=0}^{m-k-2}\frac{\partial Y_p^{m-s}}{\partial Y_p^{m-s-1}}\cdot \frac{\partial \mathrm{T}(Y_p^m)}{\partial Y_p^m}.
\end{equation}
Combining Equations (\ref{cond_uk}) and (\ref{cond_lambd}), we obtain
\begin{equation}\label{cond_ukh}
u_k=-\frac{\max\{0,\iota-k\}}{2\epsilon}E_{i_k}\prod_{s=0}^{m-k-2}\frac{\partial Y_p^{m-s}}{\partial Y_p^{m-s-1}}\cdot \frac{\partial \mathrm{T}(Y_p^m)}{\partial Y_p^m}
\end{equation}
Substituting (\ref{cond_ukh}) into (\ref{cond_st}) yields
\begin{equation*}
Y_p^{k+1}-G(P^k_{ij},c_{ij})Y_p^k+\frac{\max\{0,\iota-k\}}{2\epsilon}E_{i_k}\prod_{s=0}^{m-k-2}\frac{\partial Y_p^{m-s}}{\partial Y_p^{m-s-1}}\cdot \frac{\partial \mathrm{T}(Y_p^m)}{\partial Y_p^m}=\mathbf{0}_n, \quad k=0,1,...,m-1
\end{equation*}
which is the integrated mathematical representation of necessary conditions (\ref{cond_st}), (\ref{cond_u}), (\ref{cond_y}) and (\ref{cond_final}) for the optimal control problem (\ref{cost}).

Next, we focus on the computation of the matrix
$$
\frac{\partial Y_p^{k+1}}{\partial Y_p^{k}}, \quad k=0,1,...,m-1
$$
Clearly, this matrix can be rewritten as
\begin{equation}\label{matrix}
\frac{\partial Y_p^{k+1}}{\partial Y_p^{k}}=\left(
                                              \begin{array}{cccc}
                                                \frac{\partial y_{p,1}^{k+1}}{\partial y_{p,1}^{k}}& \frac{\partial y_{p,1}^{k+1}}{\partial y_{p,2}^{k}} & . & \frac{\partial y_{p,1}^{k+1}}{\partial y_{p,n}^{k}} \\
                                                \frac{\partial y_{p,2}^{k+1}}{\partial y_{p,1}^{k}} & \frac{\partial y_{p,2}^{k+1}}{\partial y_{p,2}^{k}} & . & \frac{\partial y_{p,2}^{k+1}}{\partial y_{p,n}^{k}} \\
                                                . & . & . & . \\
                                                \frac{\partial y_{p,n}^{k+1}}{\partial y_{p,1}^{k}} & \frac{\partial y_{p,n}^{k+1}}{\partial y_{p,2}^{k}} & . & \frac{\partial y_{p,n}^{k+1}}{\partial y_{p,n}^{k}} \\
                                              \end{array}
                                            \right)
\end{equation}
where
$$
y_{p,l}^{k+1}=g(P^k_{i_lj_l},c_{i_lj_l})y_{p,l}^{k}+e_l^TE_{i_k}u_k.
$$
Therefore, we have
\begin{equation}\label{entry}
\begin{split}
\frac{\partial y_{p,l}^{k+1}}{\partial y_{p,s}^{k}}&=\frac{\partial g(P^k_{i_lj_l},c_{i_lj_l})}{\partial y_{p,s}^{k}}y_{p,l}^{k}+g(P^k_{i_lj_l},c_{i_lj_l})\frac{\partial y_{p,l}^{k}}{\partial y_{p,s}^{k}} \\
&=\frac{\partial g(P^k_{i_lj_l},c_{i_lj_l})}{\partial P^k_{i_lj_l}}\cdot\frac{\partial P^k_{i_lj_l}}{\partial y_{p,s}^{k}}y_{p,l}^{k}+g(P^k_{i_lj_l},c_{i_lj_l})\frac{\partial y_{p,l}^{k}}{\partial y_{p,s}^{k}}, \quad s,l=1,2,...,n
\end{split}
\end{equation}
where
\begin{equation}\label{term1}
\frac{\partial y_{p,l}^{k}}{\partial y_{p,s}^{k}}=\left\{
                                                  \begin{array}{ll}
                                                    1, & \hbox{$s=l$,} \\
                                                    0, & \hbox{$s\neq l$.}
                                                  \end{array}
                                                \right.
\end{equation}
and
\begin{equation}\label{term2}
\frac{\partial g(P^k_{i_lj_l},c_{i_lj_l})}{\partial P^k_{i_lj_l}}=\left\{
                                                                \begin{array}{ll}
                                                                  -P^k_{i_lj_l}\sigma\cos\sigma((P^k_{i_lj_l})^2-c_{i_lj_l}^2), & \hbox{$\sqrt{c_{i_lj_l}^2-\frac{\pi}{2\sigma}}<|P^k_{i_lj_l}|< \sqrt{c_{i_lj_l}^2+\frac{\pi}{2\sigma}}$;} \\
                                                                 0, &\hbox{otherwise.}
                                                                \end{array}
                                                              \right.
\end{equation}
It follows from Lemma \ref{lem_pij} and Lemma \ref{lem_der} that
\begin{equation}\label{term3}
\begin{split}
\frac{\partial P^k_{i_lj_l}}{\partial y_{p,s}^{k}}&=\frac{\partial \left[e_{i_l}^TA^T diag(Y^k_p)Ae_{j_l}(e_{i_l}-e_{j_l})^T(A^T diag(Y^k_p)A)^{-1^*}P^k\right]}{\partial y_{p,s}^{k}}\\
&=\frac{\partial \left[e_{i_l}^TA^T diag(Y^k_p)Ae_{j_l}\right]}{\partial y_{p,s}^{k}}(e_{i_l}-e_{j_l})^T(A^T diag(Y^k_p)A)^{-1^*}P^k\\
&+e_{i_l}^TA^T diag(Y^k_p)Ae_{j_l}\frac{\partial \left[(e_{i_l}-e_{j_l})^T(A^T diag(Y^k_p)A)^{-1^*}P^k\right]}{\partial y_{p,s}^{k}}\\
&=e_{i_l}^TA^T diag\left(\frac{\partial Y^k_p}{\partial y_{p,s}^{k}}\right)Ae_{j_l}(e_{i_l}-e_{j_l})^T(A^T diag(Y^k_p)A)^{-1^*}P^k\\
&+e_{i_l}^TA^T diag(Y^k_p)Ae_{j_l}(e_{i_l}-e_{j_l})^T\frac{\partial (A^T diag(Y^k_p)A)^{-1^*}}{\partial y_{p,s}^{k}}P^k\\
&=e_{i_l}^TA^T diag\left(e_s\right)Ae_{j_l}(e_{i_l}-e_{j_l})^T(A^T diag(Y^k_p)A)^{-1^*}P^k\\
&-e_{i_l}^TA^T diag(Y^k_p)Ae_{j_l}(e_{i_l}-e_{j_l})^T(A^T diag(Y^k_p)A)^{-1^*}(A^Tdiag(e_s)A)^*(A^T diag(Y^k_p)A)^{-1^*}P^k.\\
\end{split}
\end{equation}
Thus, each element in Matrix (\ref{matrix}) is explicitly expressed by Equation (\ref{entry}), which can be obtained by taking into account Equations (\ref{thre_fun}), (\ref{term1}), (\ref{term2}) and (\ref{term3}). This completes
the proof of Theorem \ref{sysAE}.

\end{document}